\newcommand{\blink}[1]{\textnormal{\texttt{#1}}}
\newcommand{\NP}[0]{\blink{NP}\xspace}
\newcommand{\df}{\textit}
\newcommand{\R}{\mathbb R}
\newcommand{\I}{{\ensuremath{\mathcal{I}}}}
\newcommand{\F}{{\ensuremath{\mathcal{F}}}}
\newcommand{\rar}{\rightarrow}
\DeclareMathOperator{\mad}{mad}
\let\doendproof\endproof
\renewcommand\endproof{~\hfill\qed\doendproof}
\newtheorem{pp}{}
\title{Weak Unit Disk and Interval Representation of Planar Graphs}
\author{Md.~Jawaherul~~Alam \and Stephen~G.~Kobourov \and Sergey~Pupyrev \and
 Jackson~Toeniskoetter}
\institute{Department of Computer Science, University of Arizona, Tucson, USA}
\begin{document}
\date{}
\pagenumbering{arabic}
\pagestyle{plain}

\maketitle

\begin{abstract}
We study a variant of intersection representations with unit balls, that is, unit disks in the plane and
unit intervals on the line. Given a planar graph and a bipartition of the edges of the graph into
{\em near} and {\em far} sets,
the goal is to represent the vertices of the graph by
unit balls so that the balls representing two adjacent vertices intersect if and only
if the corresponding edge is near.
We consider the problem in the plane and prove that it is NP-hard to decide whether
such a representation exists for a given edge-partition. On the other hand,
every series-parallel graph admits such a representation with unit disks for any
near/far labeling of the edges.
We also show that the representation problem on the line is equivalent to a variant of
a graph coloring. We give examples of girth-4 planar and girth-3 outerplanar graphs that
have no such representation with unit intervals. On the other hand,
all triangle-free outerplanar graphs and all graphs with maximum average
degree less than 26/11 can always be represented. In particular,
this gives a simple proof of representability of all planar graphs with large
girth.
\end{abstract}

\section{Introduction}
\label{sect:intro}

Intersection graphs of various geometric objects have been extensively studied
for their many applications~\cite{HK01}. A graph is a $d$-dimensional \df{unit ball graph} if its
vertices are represented by unit size balls in $\R^d$, and an edge exists between two
vertices if and only if the corresponding balls intersect.
Unit ball graphs are called \df{unit disk graphs} when $d=2$ and
\df{unit interval graphs} when $d=1$.
In this paper we study the so called \df{weak unit ball graphs}. Given
a graph $G$ whose edges have been partitioned into ``near'' and ``far'' sets,
we wish to assign unit balls to the vertices of $G$ so that, for an edge
$(u,v)$ of $G$, the balls representing $u$ and $v$ intersect if and only
if the edge $(u,v)$ is near. Note that if $(u,v)$ is not an edge of $G$, then the
balls of $u$ and $v$ may or may not intersect. We refer to such graphs
as \df{weak unit disk} or \df{weak unit interval graphs} when
$d=2$ or $d=1$, respectively. A geometric representation of such graphs
(particularly, a mapping of the vertices to $\R^2$ or $\R$),
is called a \df{weak unit disk drawing} or a \df{weak unit interval drawing}; see
Fig.~\ref{fig:disks}. In figures near edges are shown as
thick line segments and far edges are dashed line segments.
Unit disk drawings allow us to represent the edges of a graph by proximity, which is intuitive from human perception point of view. Weak unit disk graphs also allow to arbitrarily forbid
edges between certain pairs of vertices, which may be useful in drawing ``almost'' unit disk graphs.
It has been shown that weak unit interval graphs can be used to compute
\emph{unit-cube contact representations} of planar graphs~\cite{Bremner12};
see Appendix~\ref{app:cubes} for more details.

\begin{figure}[t]
	\centering
	\subfigure[][]{
		\includegraphics[scale=0.75,page=3]{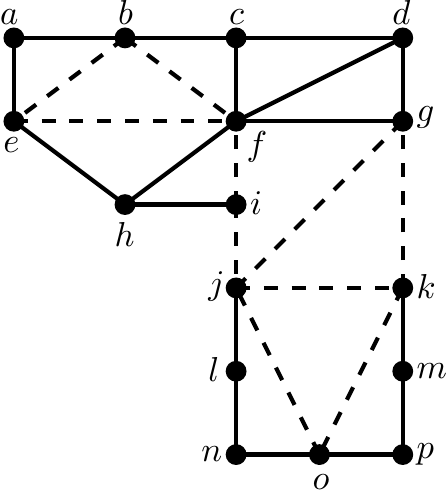}
	}
~~~~~~~~~~~~~~~~~~~~~~~~~
	\subfigure[][]{
		\includegraphics[scale=0.75,page=1]{pics/motivation}
	}
	\subfigure[][]{
		\includegraphics[scale=0.75,page=2]{pics/motivation}
	}
	\caption{
    (a)~A planar graph with an edge-labeling and its weak unit interval representation.
    (b-c)~A graph with its weak unit disk representation.
    In the figures we indicate near edges with solid lines and far edges with dashed lines.}
	\label{fig:disks}
\end{figure}

Unit disk graphs have been extensively studied for their application to wireless sensor
and radio networks. In such a network, we can model each sensor or radio as a device
with a unit size broadcast range, which naturally induces a unit disk graph by adding
an edge whenever two ranges intersect (or equivalently, when the range of one node contains
a second node). This setting makes it easy to study various practical problems. For example,
in the \emph{frequency assignment problem}, we wish to assign frequencies to radio towers
so that nearby towers do not interfere with each other.
A weakness of the unit disk model is that it does not allow
for interference between nodes from weather and geography, and it does not account for the
possibility that a pair of nodes may not be able to communicate due to technological barriers. One
attempt to address this issue has been the \emph{quasi unit disk graph}~\cite{KWZ03}, where each vertex
is represented by a pair of disks, one of radius $r$, $0<r<1$, and the other of radius 1. In this
model, two vertices are connected by an edge if their radius-$r$ disks overlap, and do not have an
edge if their radius-1 disks do not overlap. The remaining edges are in or out of the graph on a
case by case basis. In the weak unit disk model, such problems can be dealt with by simply deleting edges
between nodes which are nearby but nevertheless whose ranges do not overlap (for example because they are separated by a mountain range). This gives us more
flexibility than quasi unit disk graphs.

Formally, an \df{edge-labeling} of a graph $G=(V,E)$ is a map $\ell:E\rar\{N,F\}$. If $(u,v)\in E$, then $(u,v)$ is called \df{near} if $\ell(u,v)=N$,
and otherwise $(u,v)$ is called \df{far}.
In a unit disk (interval) representation $I$, each vertex $v\in V$ is represented as a disk (interval) centered at the point $I(v)\in \R^2$ ($\R$). We
denote by $||I(u)-I(v)||$ the distance between the points $I(u)$ and $I(v)$, and by
a slight abuse of notation, we also refer to $I(v)$ as the disk (interval) representing $v\in V$.
A weak unit disk (interval) representation of $G$ with respect to $\ell$ is
a representation $I$ such that for each edge $(u,v)\in E$, $||I(u)-I(v)||\leq d$ if and only if $\ell(u,v)=N$,
for some fixed unit $d>0$ (in other words, the disks and intervals have diameter $d$). Unless otherwise stated, we assume $d=1$.
In this paper, we study \df{weak unit disk (interval)} planar graphs, that is, planar graphs that have
appropriate representations for all possible edge-labelings.

\subsection{Related Work}
Weak unit disk graphs can be seen as a form of graph drawing/labeling
where closeness between vertices is used to define edges, albeit only from a defined set of
permissible edges. There have been many classes of graphs, defined on some notion
of closeness of the vertices.  \emph{Proximity graphs} are ones that
can be drawn in the plane such that every pair of adjacent vertices satisfies some fixed notion
of closeness, whereas every pair of non-adjacent vertices satisfy some notion of farness.
A common class of proximity graphs are \emph{Gabriel graphs},
first applied in~\cite{GS69} to the categorization of biological populations. Gabriel graphs can
be embedded in the plane so that, for every pair of vertices $(u,v)$, the disk with $u$ and $v$ as antipodal
points contains no other vertex if and only if $(u,v)$ is an edge. Recently, Evans et~al.~\cite{EGKLM12}
studied \emph{region of influence graphs}, where each
pair of vertices $u,v$ in the plane are assigned a region $R(u,v)$, and there is an edge
if and only if $R(u,v)$ contains no vertices except possibly $u$ and $v$. They generalized this
class of graphs to \emph{approximate proximity graphs}, where there are parameters
$\epsilon_1>0$ and $\epsilon_2>0$, such that a vertex other than $u$ or $v$ is contained in
$R(u,v)$, scaled by $1/(1+\epsilon_1)$ in an appropriate fashion, if and only if $(u,v)$ is an
edge, and the region $R(u,v)$, scaled by $1+\epsilon_2$ in an appropriate fashion, is empty
if and only if $(u,v)$ is not an edge. Such graphs place a stronger requirement on how far
away non-adjacent vertices must be than typical proximity graphs.

Weak unit ball representability in 1D is equivalent to \emph{threshold-coloring}~\cite{alam13}.
In this variant of the graph coloring, integer colors are assigned to the vertices
so that endpoints of near edges differ by less than a given threshold, while endpoints
of far edges differ by more than the threshold. It is shown that deciding whether a graph
is threshold-colorable with respect to a given partition of edges into near and far is equivalent
to the graph sandwich problem for proper-interval-representability, which is known to be
\NP-complete~\cite{Golumbic95}. Hence, deciding if a graph admits a weak unit interval
representation with respect to a given edge-labeling is also \NP-complete. Note the
difference with recognizing unit interval graphs, which
can be done in linear time~\cite{FMM95}. It is also known that planar graphs with girth
(the length of a shortest cycle in the graph) at least $10$ are always threshold-colorable.
Several Archimedean lattices (which correspond to tilings of the plane by
regular polygons), and some of their duals, the Laves lattices, are threshold-colorable~\cite{ourFunArxiv}.
Hence, these graph classes are weak unit interval graphs.

Unit interval graphs are also related to threshold and difference graphs.
In \emph{threshold graphs} there exists a real number $S$ and for every vertex
$v$ there is a real weight $a_v$ so that $(v,w)$ is an edge if and only if
$a_v + a_w \ge S$~\cite{Mahadev95}.
A graph is a \emph{difference graph} if there is a real number $S$ and for every vertex $v$
there is a real weight $a_v$ so that $|a_v| < S$ and $(v,w)$ is an edge if and only if
$|a_v - a_w| \ge S$~\cite{Hammer90}. Note that for
both these classes the existence of an edge is completely determined  by the threshold $S$, while
in our setting the edges defined by the threshold (size of the ball) must also belong to the original (not necessarily
complete) graph.
Threshold-colorability is also related to the \emph{integer distance graph} representation~\cite{Eggleton86,Ferrara05}. An integer distance graph is a graph with the set of integers as vertex set and with an edge joining two vertices $u$ and $v$ if and only if $|u - v| \in D$, where $D$ is a subset of the positive integers. Clearly, an integer distance graph is threshold-colorable if the set $D$ is a set of consecutive integers.

\subsection{Our Results}
We introduce the notion of weak unit disk and interval representations. While
finding representations with unit intervals is equivalent to threshold-coloring and so some results are already known, the
problem of weak unit disk representability is new. We first show that
recognizing weak unit disk graphs is hard:
For a graph $G$ with an edge-labeling $\ell$, it is \NP-hard to
decide if $G$ has a weak unit disk representation with respect to $\ell$, even if the
edges labeled $N$ induce a planar subgraph.
On the positive side, we show that any degree-2 contractible graph (as defined later)
has a weak unit disk representation. In particular, any series-parallel graph is
a weak unit disk representation.

We next study weak unit interval representations. It follows from~\cite{alam13} that
all planar graphs with high girth are always weak unit interval graphs.
We generalize the result by proving that graphs of bounded maximum average degree have
weak unit interval representations for any given edge-labeling.
In the other direction, we construct an example of a planar girth-4
graph which is not a weak unit interval, improving on the previously best girth-3 example.
Furthermore, we show that dense planar graphs do not always admit such a weak unit interval
graph representation.

Finally we study outerplanar graphs. It is known that some
outerplanar graphs with girth $3$ are not weak unit interval graphs,
and our example of a girth-4 graph is not outerplanar. Thus, a natural question
in this context is whether all girth-4 outerplanar graphs admit unit interval representation. We
answer the question in a positive way:
Every triangle-free outerplanar graph is a weak unit interval graph.

\section{Weak Unit Disk Graph Representations}
\label{sect:disks}

First we consider the complexity of recognizing weak unit disk graphs.

\begin{lemma}
\label{lem:nph}
For a graph $G$ with an edge-labeling $\ell$, it is \NP-hard to
decide if $G$ has a weak unit disk representation with respect to $\ell$, even if the
edges labeled $N$ induce a planar subgraph.
\end{lemma}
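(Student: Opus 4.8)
The plan is to reduce from a known NP-hard problem whose solutions naturally encode geometric/distance constraints. Since we must realize near edges as intersecting unit disks (distance at most $d$) and far edges as non-intersecting disks (distance more than $d$), the representation is essentially a point placement in $\R^2$ subject to many simultaneous distance inequalities. Problems of this flavor are typically proved hard by reduction from a constraint-satisfaction or graph-layout problem. I would reduce from a variant of satisfiability or, more likely, from a problem about realizability of distance constraints; a natural candidate is to build gadgets whose only feasible placements correspond to truth assignments, with the requirement that the $N$-edges form a planar subgraph forcing the gadgets to use only planar connection patterns.

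First I would design a \emph{rigidity gadget}: a small labeled graph whose weak unit disk representations are forced (up to congruence) into an essentially unique configuration, so that distances in the gadget are pinned down. The key building block is that a path or small cycle of near edges constrains consecutive centers to lie within distance $d$, while far edges between the same vertices push pairs apart; by chaining these one can build a near-rigid frame (a ``grid-like'' scaffold) whose vertices occupy approximately fixed positions. Then I would attach \emph{variable gadgets} that can settle into one of two discrete states relative to the frame, encoding a Boolean value, and \emph{clause/constraint gadgets} that are simultaneously satisfiable in the plane if and only if the underlying formula is satisfiable. Throughout, I would assign labels so that the near edges form a planar graph, which matches the lemma's strengthened hypothesis; the far edges (which need not be planar) do the work of forbidding unwanted proximities.

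The main obstacle will be controlling the \emph{continuous slack} inherent in disk representations: unlike discrete coloring, point centers live in a continuous plane, so a gadget that is ``rigid'' in intent may admit small perturbations or unexpected folds that break the intended correspondence. I expect the hard part to be proving that the gadgets are robust, i.e. that every valid representation is within a bounded neighborhood of an intended combinatorial state, so that one can read off a truth assignment. To handle this I would prove a quantitative ``snapping'' lemma: given the near/far inequalities, each variable gadget's center set is confined to two disjoint regions, and I would choose the threshold $d$ and gadget dimensions so that the gap between the regions exceeds any achievable slack. Establishing these separation bounds rigorously, while keeping the near-subgraph planar, is the crux of the reduction.

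Finally, I would verify the two directions of correctness: a satisfying assignment yields an explicit placement meeting all constraints (construct positions gadget by gadget and check each near/far inequality), and conversely any valid representation, via the snapping lemma, determines a consistent assignment. I would also confirm the reduction is polynomial in the formula size and that the constructed $N$-subgraph is planar by exhibiting a planar embedding of the gadget interconnections. The overall structure thus mirrors standard NP-hardness proofs for geometric representation problems, with the novel technical content concentrated in the rigidity and separation arguments specific to unit-disk distance constraints.
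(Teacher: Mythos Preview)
Your plan is vastly more complicated than what is needed, and it overlooks the key observation that makes this lemma almost immediate. The paper does not build any gadgets at all: it simply reduces from the known \NP-hard problem of deciding whether a given \emph{planar} graph is a unit disk graph. Given a planar graph $H$ on $n$ vertices, form $K_n$ and define $\ell(e)=N$ iff $e\in E(H)$. Then a weak unit disk representation of $(K_n,\ell)$ is precisely a placement of $n$ points in which two disks intersect iff the corresponding vertices are adjacent in $H$, i.e.\ a unit disk representation of $H$. The $N$-edges are exactly $E(H)$, which is planar by assumption, so the ``even if the $N$-edges induce a planar subgraph'' clause comes for free. That is the entire proof.

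Your SAT-style gadget reduction might be made to work, but the parts you flag as obstacles---building genuinely rigid frames out of near/far constraints, proving a quantitative snapping lemma that rules out continuous folds, and doing all this while keeping the near-subgraph planar---are exactly the hard technical content of papers that establish \NP-hardness of unit disk recognition in the first place. You would essentially be re-deriving that result from scratch inside your reduction. The lesson here is to look first for an existing hardness result whose instances embed directly: since a weak unit disk representation with \emph{all} non-edges forced far (via $K_n$) is literally a unit disk representation, the problem contains unit disk recognition as a special case, and no new geometry is required.
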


\begin{proof}
It is known that deciding if a planar graph is a unit disk graph is \NP-hard~\cite{BK98}.
Let $n$ be the number of
vertices of $G$, and define an edge-labeling $\ell$ of $K_n$ by setting $\ell(e) = N$ if
and only if $e$ is an edge of $G$. Clearly, a unit disk representation of $G$ is also a weak unit
disk representation of $K_n$ with respect to $\ell$ and vice versa.
\end{proof}

Note that the  argument above only proves \NP-hardness, and the problem
of deciding if a graph with an edge-labeling has a weak unit disk representation
is not known to be in \NP. The obvious approach is to use a
weak unit disk drawing as a polynomial size certificate. Unfortunately, it has recently been showed that
unit disks graphs on $n$ vertices may
require $2^{2^{\Theta(n)}}$ bits for a unit disk drawing with integer coordinates~\cite{MM13} .

\subsection*{Unit Disk Representation of Outerplanar and Related Graphs}

Note that the class of weak unit disk graphs strictly contains the class of weak unit interval graphs.
For example, in Fig.~\ref{fig:2dSungraph}, we give a representation of the sungraph, which is not a weak unit interval graph.
Our main goal here is to prove that every series-parallel graph is a weak unit disk graph, for every edge-labeling.
To this end, we will study a larger class of graphs, which we
call \df{degree-2 contractible} graphs.
A simple graph $G$ is a degree-2 contractible graph if one of the following holds:
\begin{enumerate}
	\item $G$ is an isolated vertex;
	\item each component of $G$ is a degree-2 contractible graph;
	\item $G$ has an edge $(u,v)$ such that $v$ has degree at most 2, and the graph
		obtained by contracting $(u,v)$ and removing parallel edges is a degree-2 contractible graph.
\end{enumerate}

\begin{figure}[tb]
\centering
  \subfigure[]{
  \includegraphics[scale=0.8,page=1]{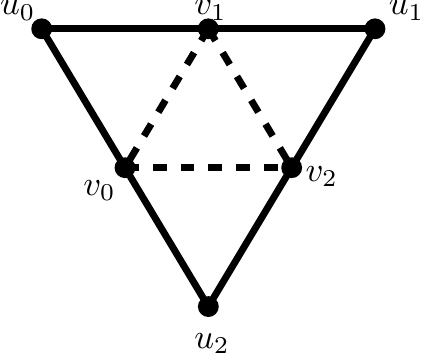}}
  ~~~~~~~~~~~~~~
  \subfigure[]{
  \includegraphics[scale=0.8,page=2]{sun_disks}}
  \caption{(a)~The sungraph has no weak unit interval representation, but
  (b)~it has a weak unit disk representation.
Near/far edges are indicated with solid/dashed line segments.}
  \label{fig:2dSungraph}
\end{figure}

\begin{theorem}
\label{thm:2dop}
Any degree-2 contractible graph is a weak unit disk graph.
\end{theorem}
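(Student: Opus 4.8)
The plan is to prove the statement by induction on the number of vertices, following the three cases of the recursive definition of degree-2 contractible graphs. The base case (an isolated vertex) is trivial, and the disjoint-union case~(2) is handled by representing each component separately and translating the pieces into far-apart regions of the plane; since there are no edges between components this introduces no constraints. The entire difficulty lies in the contraction case~(3), where $G$ has an edge $(u,v)$ with $\deg(v)\le 2$, and the graph $G'$ obtained by contracting $(u,v)$ (identifying $u,v$ into a single vertex $w$) and deleting parallel edges is degree-2 contractible. First I would fix a labeling $\ell'$ of $G'$: every edge unaffected by the contraction keeps its label, while the edge $(w,x)$ coming from $v$'s second neighbor $x$ (when $\deg(v)=2$) is labeled by $\ell(u,x)$ if $u$ was already adjacent to $x$, so that the surviving edge respects the original $(u,x)$ constraint, and is labeled arbitrarily otherwise. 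By induction $(G',\ell')$ has a representation; I would keep every vertex of $G'$ in place and set $I(u):=I(w)$, which automatically satisfies all edges incident to $u$ other than those touching $v$, and then place the single new vertex $v$.

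The main obstacle is already visible in the most constrained configuration: if both $(u,v)$ and $(v,x)$ are near, then $v$ must lie within distance $1$ of both $I(u)=I(w)$ and $I(x)$, which is possible exactly when $\|I(w)-I(x)\|\le 2$. A plain induction gives no control over this distance, since a far edge $(w,x)$ could be realized with arbitrarily large length. To overcome this I would strengthen the induction hypothesis to assert that every degree-2 contractible graph, under any labeling, admits an \emph{injective} weak unit disk representation in which every far edge has length in the interval $(1,2]$ (near edges then automatically have length at most $1$). With this strengthening the length $r:=\|I(w)-I(x)\|$ always satisfies $0<r\le 2$: the upper bound holds because $(w,x)$ is either near ($r\le 1$) or far ($r\le 2$), and $r>0$ follows from injectivity.

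Given $0<r\le 2$, placing $v$ reduces to a short case analysis over the labels of its at most two incident edges. When both are near, the intersection of the unit disks about $I(w)$ and $I(x)$ is nonempty since $r\le 2$. When exactly one is near, I would push $v$ out of $I(u)$ (or $I(x)$) directly away from the other center to land in the annulus of radii $(1,2]$ around the far neighbor while staying within distance $1$ of the near neighbor; this uses $r>0$ to exceed $1$ and $r\le 1$ to keep the new far edge within $2$. When both are far, a point on the perpendicular bisector of $I(w),I(x)$ at a suitable height lies in both annuli, again by $r\le 2$, and the degree-1 subcase is immediate. In every subcase the feasible region for $v$ is two-dimensional, so $v$ can be perturbed to avoid the finitely many existing vertices, preserving injectivity, and all newly created far edges land in $(1,2]$, so the strengthened hypothesis is maintained. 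The crux of the argument is therefore the strengthening itself: bounding far-edge lengths by $2$ is precisely what makes the both-near case feasible, while every remaining subcase is routine planar geometry.
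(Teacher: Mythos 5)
Your proof is essentially correct, but it takes a genuinely different route from the paper's. The paper also inducts on the contraction step, but it works discretely: it strengthens the hypothesis to require disks of diameter $2$, \emph{integer} centers, and every edge of length at most $\sqrt{10}$, and then disposes of the placement of $v$ by an exhaustive finite table --- after normalizing $I(u)=(0,0)$ and $I(w)=(a,b)$ with $0\le b\le a$, there are only eight possible integer positions for $w$ within distance $\sqrt{10}$, and the table gives a valid $I(v)$ for each position and each labeling of $(u,v),(v,w)$ up to symmetry. Your continuous invariant (far edges of length in $(1,2]$, hence \emph{all} edges of length at most $2$, plus injectivity) plays exactly the same role as the paper's $\sqrt{10}$ bound: both exist solely to guarantee that the two anchors of $v$ are close enough for the both-near case to be feasible, which you correctly identify as the crux. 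Your version buys a shorter, verifiable-by-hand geometric argument with no table; the paper's version buys integer coordinates (a representation on a small grid, which is pleasant given its later discussion of coordinate complexity for unit disk representations) and avoids any perturbation or injectivity bookkeeping, since with integer centers and the table's choices no degeneracies arise. Your handling of the relabeling of the contracted edge $(w,x)$ by $\ell(u,x)$ when $u$ is already adjacent to $x$ matches the paper's parenthetical remark that the merged edge gets the original label, or an arbitrary one if $(u,x)\notin E$.

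One concrete flaw: with the closed interval $(1,2]$ your invariant permits $r=\|I(w)-I(x)\|=2$ exactly, and in the both-near case the feasible region for $v$ is then the intersection of two unit disks at distance $2$, i.e.\ a \emph{single point} (the midpoint). Your blanket claim that ``in every subcase the feasible region for $v$ is two-dimensional'' fails there, so you cannot perturb, and if that midpoint happens to coincide with an existing center your injectivity invariant --- which is load-bearing, since it is what gives $r>0$ --- is not restored, breaking the induction. The fix is one line: tighten the invariant to far edges of length in the open interval $(1,2)$ (or $(1,3/2]$). All of your constructions already deliver this --- the perpendicular-bisector placement in the both-far case gives lengths in $(\sqrt{2},\sqrt{3}]$, and in the one-near case the attainable distances to $x$ over the unit disk of $u$ cover $[\max(0,r-1),\,r+1]$, which meets $(1,2)$ whenever $0<r\le 2$. (Relatedly, your sentence justifying the one-near case invokes ``$r\le 1$,'' but $r$ may lie in $(1,2]$ when $(w,x)$ is far; the case still goes through by choosing the distance to $x$ in $(1,2)\cap[\,r-1,r+1]$ rather than pushing to distance exactly $r+1$, so this is a slip in the write-up rather than in the geometry.)
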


\begin{proof}
Suppose that $G$ is a connected degree-2 contractible graph with $n>1$ vertices and edge-labeling
$\ell$. Assume that $G$ has no vertices of degree 1. We argue by
induction that $G$ is a weak unit disk graph. The base case is trivial, so consider the
following inductive hypothesis. Suppose that every degree-2 contractible graph $G'$ with
$n-1$ vertices has a weak unit disk representation $I'$ with respect to any edge-labeling
$\ell'$ of $G'$. Furthermore, suppose that (i)~the disks of the representation have diameter $2$,
(ii)~$I(v)$ has integer components for each vertex
$v$ of $G'$, and (iii)~for each edge $(u,v)$ of $G'$, $||I(u)-I(v)||\leq\sqrt{10}$.

Since $G$ is degree-2 contractible, $G$ has a vertex $v$ with exactly two neighbors $u$ and $w$,
such that contracting the edge $(u,v)$ results in a degree-2 contractible graph $G'$. We
adopt the convention that, instead of contracting $(u,v)$, we delete $v$ and add the
edge $(u,w)$ if it is not already present. By the inductive hypothesis, $G'$ has a weak unit disk representation
$I'$ with respect to the edge-labeling $\ell$ restricted to the edges of $G'$ (if edge $(u,w)$ does
not belong to $G$, give it an arbitrary label). Without loss of generality, we can assume that
$I'(u) = (0,0)$ and $I'(w) = (a,b)$ where $0\leq b<a$. We construct a representation $I$ of $G$
by setting $I(x) = I'(x)$ for every vertex $x\neq v$. To compute the value of $I(v)$, consider
Table~\ref{tab:2dop}. There we list for every possible value of $I(w)$, and every possible edge-labeling of $(u,v)$ and $(v,w)$ (except for one symmetric labeling), an appropriate value for $I(v)$
that satisfies the inductive hypothesis. The result follows.
\end{proof}

\begin{table}[tb]
    \centering
 \caption{Details of the proof of Theorem~\ref{thm:2dop}. For every
	edge-labeling (up to symmetry) and every possible value of $I'(w)$,
	we give a value for $I'(v)$. In the column for the edge-labeling, an empty
	cell indicates to take the value from the cell above.}
 \label{tab:2dop}
 \medskip
    \begin{tabular}{@{}cccc@{\hspace{1em}} cccc@{\hspace{1em}}cccc@{}}
        \toprule
        $\ell(u,v)$ & $\ell(v,w)$ & $I(w)$ & $I(v)$
        & $\ell(u,v)$ & $\ell(v,w)$ & $I(w)$ & $I(v)$
        & $\ell(u,v)$ & $\ell(v,w)$ & $I(w)$ & $I(v)$  \\
        \midrule
        $N$ & $N$ & $(1,0)$ & $(2,0)$   &   $N$ & $F$ & $(1,0)$ & $(0,2)$    & $F$ & $F$   & $(1,0)$ & $(2,2)$ \\
        & & $(2,0)$ & $(1,0)$           &&            & $(2,0)$ & $(0,1)$    &&            & $(2,0)$ & $(1,2)$\\
        & & $(3,0)$ & $(2,0)$           &&            & $(3,0)$ & $(0,1)$    &&            & $(3,0)$ & $(2,2)$\\
        & & $(4,0)$ & $(2,0)$           &&            & $(4,0)$ & $(1,0)$    &&            & $(4,0)$ & $(2,2)$\\
        & & $(1,1)$ & $(2,0)$           &&            & $(1,1)$ & $(-1,0)$   &&            & $(1,1)$ & $(0,3)$\\
        & & $(2,1)$ & $(2,0)$           &&            & $(2,1)$ & $(-1,0)$   &&            & $(2,1)$ & $(0,3)$\\
        & & $(3,1)$ & $(2,0)$           &&            & $(3,1)$ & $(1,0)$    &&            & $(3,1)$ & $(1,2)$\\
        & & $(2,2)$ & $(2,0)$           &&            & $(2,2)$ & $(1,0)$    &&            & $(2,2)$ & $(0,3)$\\

        \bottomrule
    \end{tabular}
\end{table}

Series-parallel graphs are defined as the graphs that do not have $K_4$ as a minor~\cite{D65}. Hence by definition,
these graphs are closed under edge contraction. It is also well-known that a series-parallel graph
has a vertex of degree 2, and that every outerplanar graph is a subgraph of a series parallel graph.
Thus, by Theorem~\ref{thm:2dop}, we have the following corollary.

\begin{corollary}\label{cor:opsp}
	Every outerplanar and series-parallel graph is a weak unit disk graph.
\end{corollary}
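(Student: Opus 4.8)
The plan is to reduce both graph classes to Theorem~\ref{thm:2dop} by showing they consist of degree-2 contractible graphs. Since outerplanar graphs contain no $K_4$ minor, they form a subclass of the series-parallel graphs (equivalently, a subgraph of a series-parallel graph is again series-parallel, as the class is minor-closed). Hence it suffices to prove that every series-parallel graph is degree-2 contractible; the corollary then follows immediately by applying Theorem~\ref{thm:2dop} to obtain a weak unit disk representation for any edge-labeling.

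I would establish this claim by induction on the number of vertices $n$. For the base case, a single vertex is an isolated vertex, and hence degree-2 contractible by the first rule of the definition. For the inductive step I would first dispose of disconnected graphs: each connected component of a series-parallel graph is again series-parallel and has fewer vertices, so by the induction hypothesis each component is degree-2 contractible, and the second rule of the definition certifies $G$.

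It then remains to treat a connected series-parallel graph $G$ with $n\ge 2$ vertices. Here I would invoke the two structural facts recalled in the text: series-parallel graphs are closed under edge contraction (being minor-closed), and every series-parallel graph has a vertex of degree at most $2$. Pick such a vertex $v$; since $G$ is connected with $n\ge 2$, the vertex $v$ has at least one incident edge $(u,v)$. Contracting $(u,v)$ and deleting any resulting parallel edges yields a series-parallel graph $G'$ on $n-1$ vertices, which is degree-2 contractible by the induction hypothesis. The third rule of the definition then certifies that $G$ is degree-2 contractible, completing the induction.

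The argument is largely bookkeeping, and the only point requiring care is matching the case analysis to the three rules of the definition---in particular handling disconnected graphs and vertices of degree $0$ or $1$ cleanly, and ensuring that the chosen vertex actually has an incident edge so that the third rule applies. Once the claim is proved, the corollary is immediate: every series-parallel, and therefore every outerplanar, graph is degree-2 contractible, so Theorem~\ref{thm:2dop} yields the desired weak unit disk representation.
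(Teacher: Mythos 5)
Your proposal is correct and takes essentially the same route as the paper: the paper derives the corollary from Theorem~\ref{thm:2dop} using exactly the facts you cite (series-parallel graphs, being the $K_4$-minor-free graphs, are closed under edge contraction and always contain a vertex of degree at most $2$, and outerplanar graphs fall inside this class). You merely make explicit the induction and the matching of cases to the three rules of the degree-2 contractible definition, which the paper leaves implicit.
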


\section{Weak Unit Interval Graph Representations}
\label{sect:intervals}

In this section, we study weak unit interval representability, which is equivalent to
threshold graph coloring~\cite{alam13}. Given a graph $G=(V,E)$, an
edge-labeling $\ell:E\rar\{N,F\}$, and integers $r>0$, $t\ge0$, $G$ is said to be
\df{$(r,t)$-threshold-colorable} with respect to $\ell$ if there exists a coloring $c:V\rar\{1,\ldots,r\}$ such that for each edge $(u,v)\in E$,
$|c(u)-c(v)|\leq t$ if and only if $\ell(u,v) = N$. The coloring $c$ is known as a \df{threshold-coloring}.
It is easy to see that threshold-coloring is very similar to weak unit interval representation, with the only
difference being that weak unit interval graphs do not require integer coordinates. We show that this does not
matter.

\begin{lemma}\label{lem:equiv}
    A graph $G$ is a weak unit interval graph with respect to an edge-labeling $\ell$ if and only if
    $G$ is $(r,t)$-threshold-colorable with respect to $\ell$ for some integers $r>0$, $t\ge0$.
\end{lemma}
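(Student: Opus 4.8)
The plan is to prove the two implications separately, viewing threshold-coloring as a discretized version of the unit interval representation. The routine direction is that $(r,t)$-threshold-colorability implies weak unit interval representability. Given an integer coloring $c:V\rar\{1,\dots,r\}$ realizing the labeling with integer threshold $t$, I would simply rescale the colors. Setting $I(v)=c(v)/(t+\tfrac12)$, a near edge has $|c(u)-c(v)|\le t$, hence $|I(u)-I(v)|\le t/(t+\tfrac12)<1$, while a far edge has $|c(u)-c(v)|\ge t+1$ (the colors are integers), hence $|I(u)-I(v)|\ge (t+1)/(t+\tfrac12)>1$. Thus $I$ is a weak unit interval representation with $d=1$, and the case $t=0$ is handled identically.

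For the converse, suppose $I$ is a weak unit interval representation with threshold $d=1$, so that near edges satisfy $|I(u)-I(v)|\le 1$ and far edges satisfy $|I(u)-I(v)|>1$. The key observation is that $G$ has only finitely many edges, so the quantity $\delta=\min\{\,|I(u)-I(v)|-1 : (u,v)\text{ is far}\,\}$ is a strictly positive real number; every far edge therefore has distance at least $1+\delta$. This uniform gap is exactly what lets me discretize the real coordinates without disturbing the labeling.

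The idea is then to scale by a large integer and round. I would fix an integer $M>2/\delta$ and define $c(v)=\lfloor M\,I(v)+\tfrac12\rfloor$, the nearest integer to $M\,I(v)$. Since rounding moves each coordinate by at most $\tfrac12$, each pairwise distance changes by at most $1$, i.e.\ $\bigl|\,|c(u)-c(v)|-M\,|I(u)-I(v)|\,\bigr|\le 1$. Taking the threshold $t=M+1$, a near edge gives $|c(u)-c(v)|\le M\cdot 1+1=t$, while a far edge gives $|c(u)-c(v)|\ge M(1+\delta)-1=M+M\delta-1>M+1=t$, so $|c(u)-c(v)|\ge t+1$. Finally, since $V$ is finite the colors lie in a bounded range, and translating them to start at $1$ and setting $r$ to be the span gives a coloring $c:V\rar\{1,\dots,r\}$ witnessing $(r,t)$-threshold-colorability. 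The main obstacle is precisely controlling this rounding step: a far edge whose distance barely exceeds $1$ could in principle be rounded down across the threshold, and it is the positive gap $\delta$, together with the choice $M>2/\delta$, that rules this out.
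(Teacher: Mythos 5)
Your proof is correct, and for the interesting direction it is executed differently from the paper's. Both proofs handle the easy implication by rescaling the integer coloring (your choice $I(v)=c(v)/(t+\tfrac12)$ is in fact slightly more careful than the paper's ``use $t$ as the unit,'' which is degenerate when $t=0$ since the unit $d$ must be positive). For the converse, the paper argues qualitatively: it first enlarges all intervals by a small $\epsilon$ to remove tangencies (near edges whose distance is exactly $d$), then perturbs the centers one by one to rational points, and finally clears denominators to get an integer coloring. You instead argue quantitatively: you extract a uniform gap $\delta>0$ on the far edges (legitimate, since the far inequality is strict and $E$ is finite), scale by an integer $M>2/\delta$, and round to nearest integers, absorbing the rounding error on the near side not by perturbation but by inflating the threshold to $t=M+1$; the far side survives because $M(1+\delta)-1>M+1$. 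This is a nice trade: your argument never has to touch the tight near edges at distance exactly $1$ (the paper's $\epsilon$-enlargement exists precisely to create slack there), and it replaces the paper's somewhat informal iterated perturbation ``in some fixed order'' with explicit, checkable bounds. The only loose end is trivial: if there are no far edges your $\delta$ is a minimum over the empty set, so one should declare, say, $\delta=1$ in that case; with that convention the proof is complete.
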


\begin{proof}
    Clearly, a threshold-coloring $c$ is a weak unit interval representation of $G$ with respect to $\ell$ (where
    we use $t$ as the unit of the representation), so we need only show that a weak unit interval
    representation of $G$ is equivalent to some threshold-coloring.
    Suppose that $I$ is a weak unit interval representation of $G$ with respect to $\ell$. If any of the
    intervals of $I$ intersect only at their endpoints, then we increase the length of each interval by some
    $\epsilon >0$, and choose $\epsilon$ so that the intervals have rational length. Next, we perturb the
    center point of each interval, in some fixed order, by some $\epsilon$ so that each interval is centered at
    a rational point. Next, we scale the representation so that the center of each interval is an integer,
    and the length of the intervals is an integer. The modified representation is a threshold-coloring (although $r$ and $t$ may be large).
\end{proof}

We now present a method for representing graphs, which
admit a decomposition into a forest and a 2-independent set.
By $G[U]$ we mean the subgraph of $G$ induced by the vertex set $U\subseteq V$.
Recall that a subset $\I$ of vertices in a graph $G$ is called
\df{independent} if $G[\I]$ has no edges. $\I$ is called \df{2-independent} if the shortest path in $G$ between any 2 vertices of $\I$ has length greater than 2. Similar decompositions have been applied to
other graph coloring problems~\cite{albertson04,timmons08,ourFunArxiv}.

\begin{lemma}
\label{lem:51coloring}
Suppose $G = (\I\cup \F,E)$ is a graph such that $\I$ is 2-independent, $G[\F]$
is a forest, and $\I\cap \F=\emptyset$. Then $G$ is a weak unit interval graph.
\end{lemma}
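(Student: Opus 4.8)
\emph{Proof proposal.} By Lemma~\ref{lem:equiv} it suffices to exhibit, for an arbitrary edge-labeling $\ell$, an $(r,t)$-threshold-coloring $c\colon \I\cup\F\rar\{1,\dots,r\}$ for some integers $r>0$, $t\ge0$ that we are free to choose. The first step is to extract the structure forced by the hypotheses. Since $\I$ is independent, every edge of $G$ lies inside $\F$ or joins $\I$ to $\F$; since $\I$ is $2$-independent, two vertices of $\I$ have no common neighbor, so each vertex of $\F$ has \emph{at most one} neighbor in $\I$ and the neighborhoods $\{N(v):v\in\I\}$ are pairwise disjoint subsets of $\F$. Thus $G$ is a forest $G[\F]$ with a family of vertex-disjoint stars attached, each star centered at a vertex of $\I$. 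I would also record one consequence of $G[\F]$ being a \emph{forest}: if I call two vertices of $\F$ \emph{near-equivalent} when they are joined by a path of near edges inside $\F$, then each near-equivalence class is a subtree, and it contains no far edge at all (a far edge inside such a subtree would close a cycle). Call these classes the near-components.

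The plan is then a two-stage coloring. First I color $\F$ alone. Assign to each near-component $p$ a single base value $b(p)\in\Z$ and set $c(x)=b(p)$ for every $x$ in $p$; all near forest edges are then satisfied automatically (difference $0$), and a far forest edge between components $p,q$ is satisfied exactly when $|b(p)-b(q)|>t$. Because contracting each near-component (a connected subset) in the forest $G[\F]$ again yields a forest $R$ whose edges are precisely the far forest edges, such an assignment of bases always exists: $R$ is bipartite, so I can spread the components out along the line, making far-adjacent components differ by more than $t$. Second, I place the vertices of $\I$: for $v\in\I$ I must choose $c(v)$ within distance $t$ of $b(p)$ for every near-component $p$ meeting $N(v)$ through a near edge, and at distance $>t$ from $b(q)$ for every component $q$ meeting $N(v)$ through a far edge. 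Taking $r,t$ large creates ample room, and the disjointness of the stars means these placement problems for different vertices of $\I$ interact only through the shared base values $b(\cdot)$.

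The crux — and the step I expect to be the main obstacle — is exactly this second stage carried out \emph{simultaneously} for all of $\I$. A single vertex $v\in\I$ is placeable only if all its near-neighbors sit in a window of width at most $2t$ with a slot avoiding its far-neighbors, so the naive two-value assignment for $\F$ (which collapses all bases to two values) fails: it cannot separate a near-neighbor from a far-neighbor that happen to share a base. Consequently the bases $b(p)$ cannot be chosen merely to satisfy the far forest edges; they must be chosen so that, for every $v\in\I$ at once, the near-components touching $v$ cluster inside one threshold window while the far-components touching $v$ are pushed outside it. I would handle this by assigning bases in a single careful pass that reserves, for each star, a gap wide enough to hold its near-components together and keeps its far-components in neighboring gaps, using the $2$-independence (disjointness of the stars, so the reserved gaps can be kept consistent) together with the freedom to enlarge $t$ and $r$. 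Verifying that such a globally consistent spacing always exists, and that the resulting integer coloring meets every near/far requirement, is the technical heart of the argument; Lemma~\ref{lem:equiv} then converts it into the desired weak unit interval representation.
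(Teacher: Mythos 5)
Your first stage---assigning one common base value to every vertex of a ``near-component'' of $G[\F]$---is already unsound, and no choice of bases in your second stage can repair it. The hypotheses allow a vertex $v\in\I$ to send a near edge and a far edge into the \emph{same} near-component: take $\F=\{x,y\}$ with $(x,y)$ a near edge of the forest, $\I=\{v\}$ (trivially 2-independent), and $\ell(v,x)=N$, $\ell(v,y)=F$. Your scheme forces $c(x)=c(y)=b(p)$, so $v$ would need $|c(v)-b(p)|\le t$ and $|c(v)-b(p)|>t$ simultaneously. Thus the step you defer as the ``technical heart'' (the globally consistent spacing of bases) is not merely unverified---under your ansatz it has no solution at all in such instances, because the obstruction lives inside a single component and is invisible to any reassignment of $b(\cdot)$. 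Vertices within one near-component must in general receive different colors, so the constant-per-component reduction has to be abandoned, not refined.

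The paper's proof avoids this by reversing your order of quantification and needs only five integer values with threshold $t=1$. It first pins \emph{all} of $\I$ at $0$; since $\I$ is 2-independent, each vertex of $\F$ has at most one neighbor in $\I$ (a fact you correctly derived), so its \emph{magnitude} can be set to $1$ or $2$ according to whether that single edge is near or far. Its \emph{sign} is then chosen during a BFS of each tree of $G[\F]$ to satisfy the edge to its unique already-processed neighbor: with colors in $\{-2,-1,1,2\}$, equal signs give distance at most $1$ and opposite signs give distance at least $2$, so a sign flip alone realizes either label on a forest edge. In my example this yields $I(v)=0$, $I(x)=1$, $I(y)=2$, which is consistent precisely because $x$ and $y$ are \emph{not} forced to coincide. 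The key structural point is that the two constraints on each forest vertex---one from its at most one $\I$-neighbor, one from its BFS parent---act on independent degrees of freedom (magnitude and sign); placing $\I$ first at a single point is what creates that decoupling, and it is exactly the leverage your $\F$-first ordering gives up.
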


\begin{proof}
We assume that all the intervals are centered at integer coordinates and have length $d=1$.
Suppose $\ell:E\rar\{N,F\}$ is an edge-labeling.
For each $v\in \I$, set $I(v) = 0$. Each vertex in $G[\F]$ is assigned a point from
$\{-2,-1,1,2\}$ as follows. Choose a component $T$ of $G[\F]$, and select a root
vertex $w$ of $T$. If $w$ is far from a neighbor in $I$, set $I(w)=2$; otherwise, $I(w)=1$.
Now perform breadth first search on $T$, assigning an interval for each vertex as it is traversed. When we
reach a vertex $u\neq w$, it has one neighbor $t$ in $T$ which has been processed, and at most one neighbor
$v\in \I$. If $v$ exists, we choose the interval $I(u)=1$ if $\ell(u,v)=N$, and $I(u)=2$ otherwise. Then, if the edge-label
$(u,t)$ is not satisfied, we multiply $I(u)$ by $-1$.
If $v$ does not exist, we choose $I(u)= 1$ or $-1$ to satisfy the edge $(u,t)$.
By repeating the procedure on each component of $G[\F]$, we construct a representation of
$G$ with respect to the labeling $\ell$.
\end{proof}

Recall that the \df{maximum average degree} of a graph $G$, denoted $\mad(G)$,
is the average vertex degree of a subgraph $H$ with highest average degree.
It is known that every planar graph $G$ of maximum average degree $\mad(G)$ strictly
less than $\frac{26}{11}$ can be decomposed into a 2-independent set and a forest~\cite{bu09}.
Hence,

\begin{theorem}\label{thm:mad}
Every planar graph $G$ with $\mad(G) < \frac{26}{11}$ is a weak unit interval graph.
\end{theorem}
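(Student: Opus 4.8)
The plan is to obtain the theorem as an immediate consequence of the decomposition result of \cite{bu09} together with Lemma~\ref{lem:51coloring}. First I would invoke the cited result: since $\mad(G) < \frac{26}{11}$, the vertex set of $G$ partitions as $V = \I \cup \F$, where $\I$ is 2-independent and $G[\F]$ is a forest. Because this is a partition of the vertex set, the disjointness condition $\I \cap \F = \emptyset$ required by Lemma~\ref{lem:51coloring} holds automatically.

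With the decomposition in hand, the remaining step is purely a reference to the earlier lemma. I would write $G = (\I \cup \F, E)$ and observe that all three hypotheses of Lemma~\ref{lem:51coloring} are now met: $\I$ is 2-independent, $G[\F]$ is a forest, and the two parts are disjoint. The lemma then yields that $G$ is a weak unit interval graph, meaning it admits a weak unit interval representation for \emph{every} edge-labeling $\ell$, which is exactly the stated conclusion.

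Because both the decomposition and the representation construction are already established, there is essentially no genuine obstacle in the proof itself; the heavy lifting lives in the cited combinatorial decomposition and in Lemma~\ref{lem:51coloring}. The only point worth verifying carefully is that the notion of ``decomposition into a 2-independent set and a forest'' used in \cite{bu09} matches the hypotheses of Lemma~\ref{lem:51coloring} verbatim---in particular, that ``forest'' refers to the induced subgraph $G[\F]$ rather than merely to a spanning forest, and that the independence requirement is indeed 2-independence (shortest path length greater than $2$) and not ordinary independence. Assuming the terminology aligns, the theorem follows in one line.
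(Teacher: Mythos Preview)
Your proposal is correct and matches the paper's own argument essentially verbatim: the paper simply cites \cite{bu09} for the decomposition of any graph with $\mad(G)<\frac{26}{11}$ into a 2-independent set and a forest, and then invokes Lemma~\ref{lem:51coloring}. Your added remark about checking that the terminology in \cite{bu09} aligns with the hypotheses of Lemma~\ref{lem:51coloring} is a reasonable point of care, but otherwise there is nothing to add.
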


We also note that a planar graph with girth $g$ satisfies $\mad(G) < \frac{2g}{g-2}$. Therefore,
a planar graph with girth at least $13$ has always a weak unit interval representation.

\begin{wrapfigure}{t}{.38\textwidth}
    \centering
    \includegraphics[height=3cm]{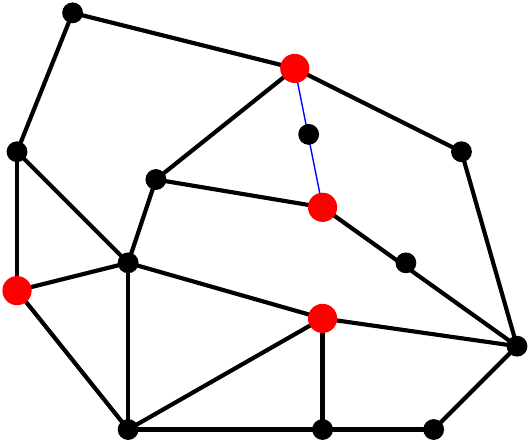}
    \caption{Decomposition of a graph into a nearly 2-independent set (red vertices)
    and a forest (black vertices and edges). Thin blue edges are \emph{bad}.}
    \label{fig:2ind}
  \vspace*{-4ex}
\end{wrapfigure}
Next we present a generalization of Lemma~\ref{lem:51coloring}, suitable for graphs which have an independent set that is in some sense nearly 2-independent.
The strategy is to delete certain edges so the independent set becomes 2-independent, obtain a unit interval representation using Lemma~\ref{lem:51coloring}, and
then modify it so that it is a representation of the original graph.
Formally, let $\I$ be an independent set in a graph $G$. Suppose that for every vertex $v\in \I$, there is at most one vertex $u\in \I$ such that the distance between
$v$ and $u$ in $G$ is 2. Also suppose that there is only one 2-path (a
path with 2 edges) connecting $v$ to $u$. Then we call $\I$ \df{nearly
 2-independent}. The pair $\{u,v\}$ is called an \df{$\I$-pair}, and
the edges of the path $(u,x,v)$ connecting $u$ and $v$ are called \df{bad edges},
which are associated with the bad pair $\{u,v\}$; see Fig.~\ref{fig:2ind}.

\begin{lemma}
\label{lem:multiplication}
Let $G=(\I\cup\F, E)$ be a graph such that $\I$ is a nearly 2-independent set, $G[\F]$ is a forest and $\I\cap\F=\emptyset$. Then $G$ is a weak unit interval graph
with respect to $\ell$.
\end{lemma}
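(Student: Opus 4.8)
The plan is to reduce to Lemma~\ref{lem:51coloring} by deleting one bad edge from each $\I$-pair's bad path, applying the earlier construction, and then repairing the representation locally so that the deleted edges are correctly realized. Concretely, for each $\I$-pair $\{u,v\}$ with unique $2$-path $(u,x,v)$, I would delete the edge $(x,v)$ (say), leaving a graph $G'$ in which $\I$ is now genuinely $2$-independent; the forest $G[\F]$ is unaffected since $x\in\F$ and $v\in\I$. By Lemma~\ref{lem:51coloring}, $G'$ has a weak unit interval representation $I'$ with all $\I$-vertices at $0$ and all forest vertices in $\{-2,-1,1,2\}$, with unit length $d=1$. The problem is that $I'$ may misrepresent the deleted edges: the distance $|I'(x)-I'(v)|=|I'(x)|\in\{1,2\}$ may satisfy or violate the near/far requirement of $(x,v)$ incorrectly.

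The key idea, suggested by the lemma's name, is to fix these discrepancies by a \emph{multiplicative} rescaling rather than an additive shift, so that a single global operation can simultaneously correct all bad edges without disturbing the already-correct edges. Since every $\I$-vertex sits at $0$, multiplying all coordinates by a positive constant $\lambda$ leaves the $\I$-vertices fixed at $0$ and scales every forest-to-$\I$ distance by $\lambda$; meanwhile forest-to-forest distances, which involve differences of nonzero values, also scale. I would choose the unit $d$ and the scale factor together: after scaling, a forest vertex $x$ with $|I'(x)|=1$ lands at distance $\lambda$ from the $\I$-vertices, and one with $|I'(x)|=2$ lands at distance $2\lambda$. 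The aim is to pick $\lambda$ and the threshold so that the near edges (distance $\le d$) and far edges (distance $>d$) of the unscaled representation remain correctly separated, while the gap between the ``$1$-level'' and ``$2$-level'' distances is now wide enough to also accommodate the correct labels on the previously-deleted edges $(x,v)$.

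The main obstacle I expect is verifying that a \emph{single} choice of parameters works uniformly for all four label combinations on a bad pair simultaneously. A bad pair $\{u,v\}$ has edges $(u,x),(x,v)$ with four possible labelings, and each forces a combination of the distances $|I(u)-I(x)|$, $|I(x)-I(v)|=|I(x)|$ relative to the threshold; since $x$ is at level $1$ or $2$ and $u,v$ are both at $0$, I would need to check that the level assignment produced by the breadth-first construction of Lemma~\ref{lem:51coloring}, possibly with a small refinement of the rule that assigns levels to vertices adjacent to $\I$-vertices, yields distances that fall on the correct side of the (rescaled) threshold for both incident edges at once. I would handle this by a short case analysis over the labels of the two edges of each bad path, exhibiting for each case the level of $x$ and confirming the separation; because each bad pair is vertex-disjoint in its interaction with $\I$ (each $v\in\I$ is in at most one $\I$-pair, via a unique $2$-path), these local adjustments do not interfere with one another, and the global rescaling makes the whole representation consistent at once.
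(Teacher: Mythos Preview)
Your reduction to Lemma~\ref{lem:51coloring} by deleting one bad edge per $\I$-pair is exactly right, and it matches the paper's first step. The gap is in the repair phase: a global rescaling together with any reassignment of the \emph{level} of the middle vertex $x$ cannot work, and the obstruction is elementary. In your framework every $\I$-vertex sits at $0$. For an $\I$-pair $\{u,v\}$ with bad path $(u,x,v)$ you therefore have
\[
||I(x)-I(u)|| \;=\; |I(x)| \;=\; ||I(x)-I(v)||,
\]
no matter what level you assign to $x$ and no matter what positive factor $\lambda$ you multiply all coordinates by. So if $\ell(u,x)=N$ and $\ell(x,v)=F$ (or vice versa), one of these two equal distances must be at most the threshold and the other strictly greater --- impossible. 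Your proposed ``small refinement of the rule that assigns levels'' cannot break this equidistance, because it only moves $x$, not $u$ or $v$.

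The paper's remedy is precisely to abandon the invariant that all of $\I$ stays at $0$: after remapping the values $\{0,\pm1,\pm2\}$ to $\{0,\pm2,\pm5\}$ with threshold $d=3$ (this widening is what buys room, and is likely the ``multiplication'' you sensed), it performs a \emph{local} fix per unsatisfied deleted edge by nudging the $\I$-vertex from $0$ to $\pm1$ and simultaneously shifting the forest endpoint (from $5$ to $4$, or from $2$ to $3$). One then checks that these shifts preserve all previously satisfied edges. The near $2$-independence hypothesis is used exactly here: each $\I$-vertex is incident to at most one deleted edge, so it is moved at most once and the local fixes do not collide. Your plan is missing this idea of displacing the $\I$-vertices; once you allow that, the case analysis you anticipated goes through.
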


\begin{proof}
Again, we assume that all the intervals are centered at integer coordinates. We use intervals of size $d=3$.

Suppose that $\ell:E\rar\{N,F\}$ is an edge-labeling of $G$. Let $E'\subseteq E$ be a set such that for each
$\I$-pair $\{u,v\}$, exactly one of the bad edges associated
with $\{u,v\}$ belongs to $E'$. Let $G'=(V,E-E')$. Then clearly $\I$ is a 2-independent set in $G'$, and $G'[\F]$ is a forest, so there exists a
weak unit disk representation $I'$ of $G'$ with respect to $\ell$.

We now modify $I'$ to construct a weak unit disk representation $I$ of $G$ with respect to
$\ell$. First, for each vertex $v\in V$, set $I(v)=0$ if
$I'(v)=0$, $I(v)=2$ if $I(v')=1$,
and $I(v)=5$ if $I'(v)=2$ (if $I'(v)$ is negative, do the same but set $I(v)$ negative).
It is clear that $I$ is a weak unit disk representation of $G'$. Now, let $(x,y)\in E'$.
One of these vertices, say $x$, is in $\I$ so $I(x)=0$, and $I(y)\in\{-5,-2,2,5\}$. Without
 loss of generality assume that $I(y)>0$; the case where $I(y)<0$ is symmetric.
Now it is possible that $\ell(x,y)=N$ but $||I(x)-I(y)||>3$ or that $\ell(x,y)=F$ but
$||I(x)-I(y)||\leq3$. In the first case, we must have $I(y)=5$. We modify $I$ so that $I(x)=1$
and $I(y)=4$.
Note that $y$ is still near to vertices with intervals centered at 2 or 5, and far from vertices with intervals
centered at less than 1. Similarly, $x$ is still close to the intervals at $-2$, $0$, or $2$, but far from $-5$
and $5$. Thus all the edges of $E-E'$ are satisfied by the modification of $I$, and
additionally the edge $(x,y)$ is satisfied.
In the second case, we have $I(y)=2$. We modify $I$ so that $I(x)=-1$ and $I(y)=3$. As
before, no edges which disagreed with the edge-labeling still disagree with the
edge-labeling.

Since $\I$ is nearly 2-independent,
our modifications to the representation $I$ will not affect non-local vertices, as every vertex in $\I$ is
adjacent to at most one edge of $E'$.
\end{proof}

\subsection*{Weak Unit Interval Representation of Outerplanar Graphs}

It is known that some outerplanar graphs containing triangles are not weak unit interval graphs, e.g., the sungraph
in Fig.~\ref{fig:2dSungraph}. Hence, we study weak unit interval representability of triangle-free outerplanar
graphs. We start with a claim for graphs with girth 5.

\begin{lemma}
\label{thm:girth5}
An outerplanar graph with girth 5 is a weak unit interval graph.
\end{lemma}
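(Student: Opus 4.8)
The plan is to exhibit, for any girth-5 outerplanar graph $G$, a partition $V=\I\cup\F$ in which $\I$ is a nearly 2-independent set and $G[\F]$ is a forest, and then to invoke Lemma~\ref{lem:multiplication}. This route is essentially forced on us, because the maximum average degree is too large for the earlier results: a girth-5 outerplanar graph on $n$ vertices has at most $\frac{4}{3}(n-1)$ edges, so $\mad(G)<\frac{8}{3}$, which exceeds the $\frac{26}{11}$ bound of Theorem~\ref{thm:mad}. Hence we cannot appeal to an existing decomposition and must construct one by hand.

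A first useful observation is that girth 5 makes half of the nearly-2-independent condition automatic. If two vertices $u,v$ were joined by two distinct $2$-paths, they would have two common neighbors and thus lie on a $4$-cycle, which girth 5 forbids. Consequently any two vertices are joined by at most one $2$-path, so to certify that a set $\I$ is nearly 2-independent it suffices to guarantee that each vertex of $\I$ has at most one other vertex of $\I$ at distance exactly $2$. In other words, the only genuine work is to produce an independent set $\I$ meeting every cycle (so that $G[\F]$ is a forest) whose members are pairwise at distance $>2$, up to an almost-perfect matching of exceptions.

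To build $\I$ I would induct on $|V|$, deleting vertices, which preserves both outerplanarity and girth $\ge 5$. Vertices of degree $\le 1$ are placed in $\F$ and removed. For the $2$-connected pieces I would exploit the ear structure of the outer cycle: since every inner face has length $\ge 5$ and each leaf of the weak dual is bounded by a single chord together with a path of outer edges, every such leaf yields a run $x_1,x_2,\ldots,x_k$ of at least three consecutive degree-$2$ vertices. I would place the middle vertex of such a run into $\I$, delete the run, recurse on the smaller graph, and then re-insert the run, routing its remaining vertices into $\F$. The length-$\ge 3$ run provides exactly the slack needed to keep $\I$ independent and to ensure that the newly added $\I$-vertex has at most one $\I$-partner at distance $2$, while removing one vertex of the ear destroys its face and keeps $G[\F]$ acyclic.

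The main obstacle is the bookkeeping at the attachment vertices where ears meet the rest of the graph, and where several ears may share a common high-degree vertex: distance-$2$ relationships between $\I$-vertices can travel through these junctions and even across distinct blocks, so the ``at most one distance-$2$ partner'' bound must be maintained globally rather than ear by ear. I expect to handle this by strengthening the inductive hypothesis to record, for each surviving boundary vertex, whether it has already been \emph{spent} as the unique distance-$2$ partner of some chosen $\I$-vertex, and to always select the run vertex whose distance-$2$ neighbors are still unspent; the run length of at least three is precisely what makes such a choice available. Once the decomposition is in hand, Lemma~\ref{lem:multiplication} yields a weak unit interval representation for every edge-labeling, completing the proof.
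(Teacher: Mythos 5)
Your skeleton---induction that peels off an ear of the weak dual and places one middle degree-2 vertex of the run into $\I$---is the same as the paper's, but your proposal has a genuine gap exactly where you flag it: the ``spent-partner'' bookkeeping that is supposed to maintain near 2-independence across junction vertices and blocks is never formulated or verified, and it is the entire content of the proof. You also misdiagnose the situation in a way that creates this difficulty: the route through nearly 2-independent sets and Lemma~\ref{lem:multiplication} is not ``essentially forced.'' The fact that $\mad(G)$ can exceed $\frac{26}{11}$ only means Theorem~\ref{thm:mad} does not apply as a black box; it does not prevent a hand-built decomposition into a forest and a \emph{genuinely} 2-independent set, which is what the paper constructs, so that the simpler Lemma~\ref{lem:51coloring} suffices.

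The missing idea that makes the induction close without any global bookkeeping is to condition the choice on the attachment vertices. Take an internal face $f=(v_1,\dots,v_l)$, $l\ge 5$, whose vertices other than $v_1,v_l$ have degree 2, delete $v_2,\dots,v_{l-1}$, and recurse. When re-inserting: if $v_1$ or $v_l$ already lies in $\I$, put \emph{all} of $v_2,\dots,v_{l-1}$ into the forest (the face cycle is broken at the $\I$-attachment, so $G[\F]$ stays acyclic); otherwise put $v_3$ into $\I$ and the rest into the forest. Since $v_2$ and $v_4$ have degree 2, the entire distance-2 ball of $v_3$ is $\{v_1,v_5\}$ (with $v_5=v_l$ when $l=5$), and in this branch both lie in $\F$---so $v_3$ has \emph{no} $\I$-vertex within distance 2, and strict 2-independence is preserved with no ``spent'' flags at all. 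Your worry about distance-2 pairs traveling through junctions also evaporates: re-inserting the ear adds a path of length $l-1\ge 4$ between $v_1$ and $v_l$, which cannot shorten any distance between previously placed $\I$-vertices. In short, your plan is repairable, but as written it defers the only hard step to an unspecified strengthened hypothesis, whose existence in all configurations (e.g., when both attachments are in $\I$ or adjacent to $\I$-vertices) you do not establish; the paper's branching rule replaces that machinery entirely. (Your side observation that girth 5 forces 2-paths between any two vertices to be unique is correct, but it becomes moot once one aims for strict 2-independence.)
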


\begin{proof}
We prove that girth-5 outerplanar graphs may be decomposed into a forest and a 2-independent set using induction on the number of internal faces. The result will follow from Lemma~\ref{lem:51coloring}. The claim is trivial
for a single face, so assume that it is true for all girth-5 outerplanar graphs with $k\geq1$ internal faces. Let $G$ be a girth-5 outerplanar graph with $k+1$ internal faces. Since $G$ is outerplanar, it must have at least one
face $f=(v_1,\dots,v_l)$, $l\geq5$, such that every vertex of $f$ except $v_1,v_l$ is of degree 2. Consider the graph $G'$ obtained by deleting $v_2,\dots,v_{l-1}$. $G'$ has a decomposition into
a 2-independent set $\I$ and a forest $T$. Now we will add the vertices $v_2,\dots,v_{l-1}$ to either $\I$ or $T$ so that $\I$ is a 2-independent set in $G$, and $T$ is a forest. If either of $v_1,v_l$ belongs to $\I$, then
add all the remaining vertices to $T$. Otherwise, add $v_3$ to $\I$ and the rest to $T$. Since $v_1,v_l$ are not in $\I$, $v_3$ has distance at least 3 from any other element of $\I$.
\end{proof}

Next our goal is to show that a triangle-free outerplanar graph $G$ always has a weak
unit interval representation for any edge-labeling.
We assume that all the intervals are centered at integer coordinates and we use intervals of size $d=2$.
Our strategy is to find a representation of $G$ by a traversal of its
weak dual graph $G^*$ (the planar dual minus the outerface), where we find intervals for all the vertices
in each interior face of $G$ as it is traversed in $G^*$. Since we are considering
triangle-free graphs, this implies that we take
a path $P_n= (u_1,u_2,\ldots,u_n)$, $n\ge 4$, where the two end vertices $u_1$ and $u_n$
are already processed and we need to assign unit intervals
to the internal vertices $u_2$, \ldots, $u_{n-1}$ of $P_n$. We additionally maintain the invariant in our
representation that for each edge $(u,v)$ of $G$, $||I(u)-I(v)||\leq6$. For a particular edge-labeling
$\ell$ of $P_n = (u_1, \ldots, u_n)$, call a pair of coordinates
$x$, $y$ \df{feasible} if there is a weak unit disk representation $I$ of $P_n$ for $\ell$ with $d=2$,
where $I(u_1)=x$, $I(u_n)=y$, and for any
$i\in\{1,\ldots,n-1\}$, $||I(u_i)-I(u_{i+1})||\le 6$.
We first need the following three claims.

\begin{pp}
\label{pp:1}
For any value of $x\in\{2,3,-2,-3\}$, the pair $0,x$ is feasible for any
 edge-labeling $\ell$ of $P_3=(u_1,u_2,u_3)$.
\end{pp}
\begin{proof} Without loss of generality, we may assume that $x>0$. We compute a desired
 weak unit disk representation $I$ with $r=2$ for $P_3$ with respect to $\ell$ as follows.
 Assign $I(u_1)=0$ and $I(u_3)=x$. Assign $I(u_2)$ in such a way that $|I(u_2)|=2$ if
 $\ell(u_1,u_2)=N$, and $|I(u_2)|=3$ if $\ell(u_1,u_2)=F$. Then choose the sign of $I(u_2)$ to be
 the same as $I(u_3)$ if $\ell(u_2,u_3)=N$, and the opposite of $I(u_3)$ if $\ell(u_2,u_3)=F$.
\end{proof}

\begin{pp}
\label{pp:2}
For any edge-labeling of $P_3=(u_1,u_2,u_3)$, either $0$, $4$ or $0$, $6$ are feasible.
\end{pp}
\begin{proof} We compute a desired weak unit disk representation $I$ with $r=2$ for $\ell$ as follows.
 If $\ell(u_1,u_2)=l(u_2,u_3)=N$, then $I(u_1)=0$, $I(u_2)=2$, and $I(u_3)=4$. Otherwise,
 assign $I(u_1)=0$, $I(u_3)=6$, and $I(u_2)=2,3$ or $4$ in case the pair of edge-labelings
 $(l(u_1,u_2), l(u_2,u_3))$ have values $(N,F)$, $(F,F)$, and $(F,N)$, respectively.
\end{proof}

\begin{pp}
\label{pp:3}
For any value of $x\in[-6,6]$, the pair $0,x$ is feasible for any edge-labeling of
 $P_n=(u_1,u_2,\ldots,u_n)$, $n\ge 4$.
\end{pp}
\begin{proof} Without loss of generality, let $x\ge 0$. Consider first the case for
 $n=4$. Take a particular edge-labeling $\ell$ of $P_4$. For any value of $0\le x\le 5$, there
 is at least one number $y\in\{2,3,-2,-3\}$ and at least one number $z\in\{2,3,-2,-3\}$ such
 that $|x-y|\le 2$ and $2<|x-z|\le 6$. In particular, it suffices to choose for $x=0$, $y=2$,
 $z=3$; for $x=1,2,3,4$, $y=2$, $z=-2$ and for $x=5$, $y=3$, $z=2$. Thus if
 $0\le I(u_4)\le 5$, and regardless of whether $\ell(u_3,u_4)$ is $N$ or $F$, one can choose a
 value for $I(u_3)$ from $\{2,3,-2,-3\}$ respecting both the edge-labeling of $(u_3,u_4)$
 and the property that $||I(u_3)-I(u_4)||\le 6$. Then by~\ref{pp:1}, $0$ and $x$
 is feasible for the edge-labeling $\ell$ of $P_4$. A similar argument shows that if $\ell(u_3,u_4)=F$,
 then $0$ and $x=6$ is feasible. On the other hand, if $x=6$ and $\ell(u_3,u_4)=N$, then both
 $4$ and $6$ are valid choices for $I(u_3)$. Then by~\ref{pp:2}, $0$ and $6$ is feasible
 for any edge-labeling $\ell$ of $P_4$.

Consider now the case with $n>4$. Then assign coordinates $I(u_1)=0$, $I(u_n)=x$ and for
 $i\in\{n-1, \ldots, 4\}$, assign $I(u_i)\in[-6,6]$ such that it respects both $\ell(u_i,u_{i+1})$
 and the property that $||I(u_i)-I(u_{i+1})||\le 6$. Then a similar argument as that for $n=4$ can be used to extend this representation to $u_2$ and $u_3$.
\end{proof}

The next corollary immediately follows from~\ref{pp:3}.

\begin{corollary}
\label{cor:trifree}
 Any pair $x$, $y$ with $|x-y|\le6$, is feasible for any edge-labeling of
 $P_n=(u_1,u_2,\ldots,u_n)$, $n\ge 4$.
\end{corollary}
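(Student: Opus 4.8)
The plan is to reduce the general statement to Claim~\ref{pp:3} by exploiting the translation invariance of feasibility. The essential observation is that whether a pair of endpoint coordinates is feasible depends only on their difference and not on their absolute location, because the relation ``two unit intervals intersect'' is governed entirely by the distance between their centers, which is unchanged when the whole configuration is shifted rigidly along the line.

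First I would fix an arbitrary edge-labeling $\ell$ of $P_n$ and a pair $x,y$ with $|x-y|\le 6$, and set $z=y-x$, so that $|z|=|y-x|\le 6$. Since $n\ge 4$ and $z\in[-6,6]$, Claim~\ref{pp:3} guarantees that the pair $0,z$ is feasible for $\ell$. Let $I'$ be a witnessing representation, so that $I'(u_1)=0$, $I'(u_n)=z$, each interval has $d=2$, each center $I'(u_i)$ is an integer, and $||I'(u_i)-I'(u_{i+1})||\le 6$ for every $i$.

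Next I would define the shifted representation $I(u_i)=I'(u_i)+x$ for all $i$, which immediately yields $I(u_1)=x$ and $I(u_n)=z+x=y$, the two endpoint values demanded by the corollary. It then remains to check that $I$ is a valid weak unit interval representation of $P_n$ for $\ell$, and this is where the invariance does all the work: for each edge $(u_i,u_{i+1})$ we have $|I(u_i)-I(u_{i+1})|=|I'(u_i)-I'(u_{i+1})|$, so the two intervals intersect under $I$ exactly when they did under $I'$. Hence every near/far label remains satisfied, and the bound $||I(u_i)-I(u_{i+1})||\le 6$ is likewise preserved.

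There is no genuine obstacle here, which is precisely why the statement is stated as a corollary rather than a lemma. The only points requiring any care are bookkeeping: confirming that the intersection or non-intersection of two equal-length intervals is a function of the center distance alone, so that the labeling is preserved under the shift, and noting that $x$ is an integer so that the translated centers $I'(u_i)+x$ remain at integer coordinates, keeping the representation within the integer-centered convention used throughout this section.
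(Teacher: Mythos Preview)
Your proposal is correct and is exactly the argument the paper has in mind: the paper states only that the corollary ``immediately follows from'' Claim~\ref{pp:3}, and the translation-invariance observation you spell out is precisely that immediate step. Your care about $x$ being an integer (so the shifted centers stay on the integer grid used throughout the section) is appropriate, since the proof of Claim~\ref{pp:3} itself only treats integer endpoints.
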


\begin{theorem}
Every triangle-free outerplanar graph is a weak unit interval graph.
\end{theorem}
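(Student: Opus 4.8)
The plan is to build a threshold-coloring---equivalently, by Lemma~\ref{lem:equiv}, a weak unit interval representation---through a single traversal of the weak dual tree of $G$, assigning all vertices of each interior face as the face is reached and invoking Corollary~\ref{cor:trifree} to handle that face. First I would reduce to the $2$-connected case: distinct blocks of $G$ share only cut vertices and no edge of $G$ joins two different blocks, so the blocks impose no constraints on one another. Hence I can represent each block separately and then integer-translate the representations to make the shared cut vertices agree (distinct connected components are handled the same way), while a bridge is a single edge that is trivially represented. So assume $G$ is $2$-connected, in which case every vertex lies on the outer cycle, every interior face is bounded by a cycle, and the weak dual $G^*$ is a tree.

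Next I would root $G^*$ at an arbitrary face and process the faces in a top-down (BFS or DFS) order, so that every non-root face $f$ is handled after its parent. The key structural observation is that the edge $(u_1,u_n)$ dual to the tree-edge joining $f$ to its parent is a chord separating $G$ into the already-processed region and the part containing $f$; consequently $u_1$ and $u_n$ are exactly the vertices $f$ shares with the processed region, they already carry coordinates, and the remaining boundary vertices $u_2,\dots,u_{n-1}$ of $f$ are fresh. Since $G$ is triangle-free the cycle bounding $f$ has length at least $4$, so deleting the chord leaves a path $P_n=(u_1,\dots,u_n)$ with $n\ge 4$ whose interior vertices are unassigned. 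Throughout, I maintain the invariant that $\|I(x)-I(y)\|\le 6$ for every edge already placed; in particular $\|I(u_1)-I(u_n)\|\le 6$, so Corollary~\ref{cor:trifree} lets me assign $I(u_2),\dots,I(u_{n-1})$ so that all labels of $P_n$ are satisfied and the invariant is preserved. The root face is started by fixing one of its edges---placing its endpoints at $0$ and at $2$ or $3$ according to the label, which realizes that edge within distance $6$---and then applying Corollary~\ref{cor:trifree} to the remaining path.

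The label of each shared chord $(u_1,u_n)$ is already satisfied when its parent face is processed, and the invariant supplies the distance bound needed to apply the corollary, so every edge of $G$ is correctly realized once the traversal completes. I expect the main obstacle to be the structural bookkeeping rather than the arithmetic: one must verify that $G^*$ is genuinely a tree, that each non-root face meets the processed region in exactly the one chord dual to its parent edge (so that its interior path vertices are truly new and no previously fixed coordinate is overwritten), and that the block/cut-vertex reduction introduces no cross-constraints. Once these points are established, the per-face extension is entirely supplied by Corollary~\ref{cor:trifree}, and the theorem follows via Lemma~\ref{lem:equiv}.
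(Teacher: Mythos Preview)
Your proposal is correct and follows essentially the same approach as the paper: both maintain the invariant $\|I(u)-I(v)\|\le 6$ on edges and extend the representation one interior face at a time via Corollary~\ref{cor:trifree}, the paper phrasing this as induction (peeling off a leaf face of the weak dual) while you phrase it as a top-down traversal. The only real difference is the reduction to the $2$-connected case---the paper augments $G$ by adding edges and length-$2$ paths to absorb bridges and cut vertices into a single $2$-connected triangle-free outerplanar supergraph, whereas you process blocks separately and glue by translation---but both reductions are routine and the core argument is the same.
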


\begin{proof}
    If $G$ is not 2-connected, we augment it in the following way. If
 $G$ has a bridge $(v,w)$, let $u\neq w$ be a neighbor of $v$, and $x\neq v$ a neighbor
 of $w$, then add the edge $(u,x)$. If $G$ has a cut vertex $v$, then let $H_1,\dots,H_k$
 be the 2-connected components of $G$ containing $v$. For $i\in\{1,\dots,k-1\}$, let
 $u$ be a neighbor of $v$ in $H_i$, and $w$ be a neighbor of $v$ in $H_{i+1}$.
 Add the path $(u,x,w)$, where $x$ is a new vertex. Clearly, any weak unit interval representation of the
 new 2-connected graph is also a weak unit interval representation of $G$, and the new graph is
 outerplanar with girth 4.

Now let $G$ be a 2-connected triangle-free outerplanar graph with $n>4$ vertices embedded in the plane with every
 vertex on the outerface, and let $\ell$ be an edge-labeling of $G$. We next compute a
 weak unit interval representation of $G$ for $\ell$. The proof is by induction, with the $n$-vertex cycle as a base case. Assume the inductive hypothesis that every triangle-free outerplanar graph with fewer than $n$
 vertices is a weak unit interval graph. Further, assume that for such a graph $G'$ and any edge-labeling $\ell'$
 of $G'$, there is a weak unit interval representation of $G'$ for $\ell'$ such that any 2 neighbor vertices $u$ and $v$
 satisfy $||I(u)-I(v)||\le6$. It is clear that if $G$ has at least two cycles, then $G$ has a path $P_k=(u_1,\ldots,u_k)$ where $k\ge4$ such that $\deg(u_i)=2$ for some $1<i<k$. The theorem follows from the inductive hypothesis and Corollary~\ref{cor:trifree}.
\end{proof}

\subsection*{Planar Graphs without Weak Unit Interval Representations}

Planar graphs with high edge density may not have weak unit interval representations.
First we prove the result for a \df{wheel graph}, defined as $W_n$, $n\geq4$,
formed by adding an edge from a vertex $v_1$ to every vertex of an $(n-1)$-cycle
$(v_2,\dots,v_{n},v_2)$.

\begin{lemma}
\label{lm:wheel}
A wheel graph is not a weak unit interval graph.
\end{lemma}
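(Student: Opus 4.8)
The claim is that a wheel graph $W_n$ (a central vertex $v_1$ joined to every vertex of a cycle $C_{n-1}$) admits no weak unit interval representation. My plan is to exhibit a single adversarial edge-labeling $\ell$ that cannot be satisfied by any assignment of integer colors, invoking Lemma~\ref{lem:equiv} so that it suffices to rule out an $(r,t)$-threshold-coloring. The natural choice is to label every spoke edge $(v_1,v_i)$ as \emph{near} and every rim edge as \emph{far}, or some close variant. I would first fix the threshold picture: with unit $d$, declaring all spokes near forces every rim vertex $v_i$ to lie within distance $d$ of the hub, i.e. all rim centers are confined to an interval of length $2d$ centered at $I(v_1)$. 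On the real line this is the crucial structural consequence --- the whole rim is squeezed into a bounded window determined by the hub.

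The key step is then a pigeonhole/ordering argument on that window. Since the rim vertices all sit inside an interval of length $2d$ but every consecutive pair on the rim is labeled \emph{far} (so consecutive rim centers must differ by more than $d$), I would show there is simply not enough room. First I would observe that any point in the window is within $d$ of any other point that lies on the same side of the hub, so two rim vertices whose centers fall on the same side of $I(v_1)$ and are both near the hub are automatically within distance $d$ of each other. Consequently, for a far rim edge its two endpoints must lie on opposite sides of the hub. This forces the rim, read cyclically, to alternate sides of the hub at every edge; but an odd cycle cannot be properly 2-colored this way, so if $n-1$ is odd we get an immediate contradiction. For the even case I would refine the bound, using that "far" demands a strict separation exceeding $d$, so the two sides cannot both be used freely: at most one rim vertex can sit at distance exactly in the right band on each side before a subsequent far constraint is violated, again collapsing the available positions.

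I expect the main obstacle to be handling the even cycle $C_{n-1}$, where the simple parity argument fails and one must extract a genuine metric contradiction rather than a combinatorial one. The idea I would pursue is quantitative: order the rim centers on the line and track how consecutive far edges force the centers to leapfrog across the hub while each stays within $d$ of the hub. Writing the rim centers as $I(v_2),\dots,I(v_n)$, the constraints $|I(v_1)-I(v_i)|\le d$ and $|I(v_i)-I(v_{i+1})|>d$ for all rim edges cannot be simultaneously met, because summing or telescoping the separations around the whole cycle returns to the start and overshoots the width-$2d$ window. I would make this precise by assigning to each rim vertex its signed offset $I(v_i)-I(v_1)\in[-d,d]$ and showing the far constraints force each offset to have absolute value close to $d$ with alternating sign, which around a cycle of any length eventually produces two adjacent offsets of the same sign, hence a near pair where a far edge was required.

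Finally I would note the reduction to integers is harmless: by Lemma~\ref{lem:equiv} a weak unit interval representation exists iff an $(r,t)$-threshold-coloring exists, so the continuous contradiction above, which only uses the triangle-inequality structure of the window, transfers verbatim to the coloring setting. The cleanest write-up likely works directly with a threshold-coloring $c$ and the two inequalities $|c(v_1)-c(v_i)|\le t$ and $|c(v_i)-c(v_{i+1})|>t$, deriving the impossibility by the side-alternation argument around the rim cycle. I would present it that way to avoid the $\epsilon$-perturbations and keep the argument purely arithmetic.
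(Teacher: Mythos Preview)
Your chosen edge-labeling---all spokes \emph{near}, all rim edges \emph{far}---is not adversarial in general, so the plan cannot succeed as stated. When the rim cycle $C_{n-1}$ has even length (i.e.\ $n$ is odd), this labeling is in fact representable: take $d=1$, put the hub at $0$, and place the rim vertices alternately at $+1$ and $-1$. Every spoke has length $1\le d$ and every rim edge has length $2>d$, so all constraints are met. Hence there is simply no contradiction to extract in the even-rim case, and no ``quantitative refinement'' or telescoping argument can manufacture one. Your side-alternation argument is correct for odd rim cycles, but it is the \emph{only} obstruction this labeling creates.

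The paper avoids this by choosing a different labeling that works uniformly in $n$: the two spokes $(v_1,v_2)$ and $(v_1,v_n)$ are near, all other spokes are far; the rim edge $(v_2,v_n)$ is far, and all other rim edges are near. The proof is then a short one-sided propagation rather than a parity argument: from $I(v_2)>I(v_1)$ one shows inductively that every $I(v_i)>I(v_1)$, using that each rim edge $(v_{i-1},v_i)$ is near while at least one of the incident spokes is far. In particular $I(v_2)$ and $I(v_n)$ both lie in $(I(v_1),\,I(v_1)+1]$, so $|I(v_2)-I(v_n)|\le 1$, contradicting $\ell(v_2,v_n)=F$. If you want to salvage your overall strategy, switch to this (or a similarly asymmetric) labeling; the symmetric ``all spokes near / all rim far'' choice is too weak.
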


\begin{proof}
Define an edge-labeling $\ell$ of $W_n$ by $\ell(v_2,v_n) = F$, $\ell(v_1,v_i)=F$
for $3 \le i \le n-1$, and every other edge labeled $N$; see Fig.~\ref{fig:wheel}. Suppose $I$ is a
weak unit interval representation of $W_n$ with respect to $\ell$.
Since only one edge of the triangle $(v_1,v_2,v_n,v_1)$ is far, it
must be that $I(v_1)\neq I(v_2)$, so we may assume that $I(v_1)<I(v_2)$. For $3 \le i \le n$, if
$I(v_{i-1})>I(v_1)$, we have $I(v_i)>I(v_1)$, since $\ell(v_{i-1},v_i)=N$ and either
$\ell(v_1,v_{i-1})$ or $\ell(v_1,v_i)$ is $F$. But then $I(v_1)<I(v_2)\leq I(v_1)+1$, and
$I(v_1)< I(v_n)\leq I(v_1)+1$, contradicting the fact that $\ell(v_2,v_n)=F$ and $I$ is a
weak unit interval representation.
\end{proof}

\begin{figure}[t]
	\centering
	\subfigure[][] {
		\includegraphics{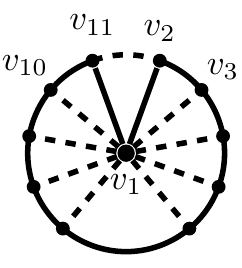}
		\label{fig:wheel}
	}
~~~~~~~~~~~~~~~~~~~~~~~~~~~~
	\subfigure[][] {
		\includegraphics{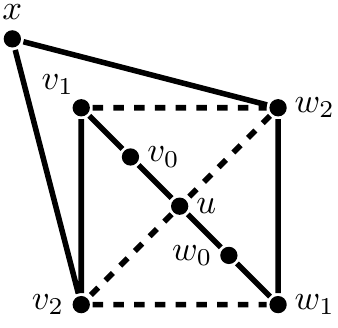}	
		\label{fig:girth4cex}
	}
	\caption{
(a)~A wheel graph $W_{11}$ with an edge-labeling, which does not have a weak unit interval representation.
(b)~A girth-4 graph with an edge-labeling, which does not have a weak unit interval representation.
}
\end{figure}

Using Lemma~\ref{lm:wheel}, it is easy to see that any maximal planar graph with $|V|\ge 4$
is not a weak unit interval graph.
Indeed, consider such a graph $G = (V, E)$ and a vertex $v\in V$; the neighborhood
$N(v) = \{u \,|\, (v, u)\in E\}$ together with $v$ induces a wheel subgraph. The observation
leads to the following theorem.

\begin{theorem}\label{thm:maximal}
Every planar graph $G$ with $\mad(G) \ge \frac{11}{2}$ is not
a weak unit interval graph.
\end{theorem}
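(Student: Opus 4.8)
The plan is to reduce the statement to a purely combinatorial fact: a planar graph that contains a wheel $W_n$ (for some $n \ge 4$) as a subgraph cannot be a weak unit interval graph. This reduction rests on Lemma~\ref{lm:wheel} together with a monotonicity observation. If $H$ is a subgraph of $G$ and $\ell_H$ is an edge-labeling of $H$ admitting no representation, I extend $\ell_H$ arbitrarily to a labeling $\ell$ of $G$; then any weak unit interval representation of $G$ for $\ell$, restricted to $V(H)$, would satisfy every edge constraint of $H$ and hence be a representation of $H$ for $\ell_H$, a contradiction. Thus being a non-weak-unit-interval graph is inherited by supergraphs, and it suffices to prove that $\mad(G) \ge \frac{11}{2}$ forces $G$ to contain some wheel $W_n$, $n\ge 4$.

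I prove the contrapositive: every wheel-free planar graph has $\mad < \frac{11}{2}$. The key reformulation is that $G$ contains a wheel with hub $v$ exactly when the neighborhood subgraph $G[N(v)]$ contains a cycle (a cycle in $G[N(v)]$ is an $(n-1)$-cycle with $n-1\ge 3$, and $v$ joined to it gives $W_n$). Hence $G$ is wheel-free if and only if $G[N(v)]$ is a forest for every vertex $v$. Since wheel-freeness is hereditary and $\mad$ is attained on some subgraph, it is enough to bound the average degree of an arbitrary wheel-free planar graph; after replacing it by its densest connected component I may assume it is connected with minimum degree at least $3$, which rules out the degenerate configurations ($K_3$, bridges, degree-$1$ vertices) in the counting below.

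The heart of the argument is a double count of triangles. On one hand, the number of triangles through a vertex $v$ equals the number of edges of $G[N(v)]$, which is at most $\deg(v)-1$ because this subgraph is a forest; summing over all $v$ and dividing by $3$ gives $t \le \frac{2|E|-|V|}{3}$, where $t$ is the total number of triangles. On the other hand, every face of the embedding has boundary length at least $3$, so $2|E| \ge 3f_3 + 4(|F|-f_3) = 4|F| - f_3$, giving $f_3 \ge 4|F| - 2|E|$ for the number $f_3$ of triangular faces; and since distinct triangular faces have distinct bounding $3$-cycles, $f_3 \le t$. Combining $4|F| - 2|E| \le t \le \frac{2|E|-|V|}{3}$ with Euler's formula $|F| = |E|-|V|+2$ yields $4|E| \le 11|V| - 24$, that is $|E| \le \frac{11}{4}|V| - 6$, so the average degree is $\frac{2|E|}{|V|} < \frac{11}{2}$. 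This contradicts $\mad(G) \ge \frac{11}{2}$, so $G$ must contain a wheel, and the theorem follows.

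I expect the main obstacle to be the bookkeeping that makes the triangle double count rigorous, specifically justifying that Euler's formula applies, that each face has length at least $3$, that $f_3 \le t$, and that $G[N(v)]$ being a forest gives at most $\deg(v)-1$ triangles through $v$. Reducing first to a connected densest subgraph of minimum degree at least $3$ is precisely what lets all of these hold simultaneously; the remainder is routine arithmetic.
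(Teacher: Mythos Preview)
Your proposal is correct and follows essentially the same route as the paper: both arguments use Lemma~\ref{lm:wheel} to deduce that every neighborhood $G[N(v)]$ is acyclic, bound the number of triangular faces by $(2|E|-|V|)/3$ via the resulting per-vertex bound $\deg(v)-1$, and then combine $2|E|\ge 3f_3+4(|F|-f_3)$ with Euler's formula to obtain $|E|\le \frac{11}{4}|V|-6$. Your write-up is in fact a bit more careful than the paper's about the preliminary reduction (passing to a densest subgraph to secure connectivity, minimum degree $\ge 3$, and $f_3\le t$), but the substance is the same.
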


\begin{proof}
To prove the claim, we show that a weak unit interval planar graph has at most
$\lfloor 11|V|/4\rfloor-6$ edges.

Consider a vertex $v$ of a weak unit interval planar graph $G=(V, E)$ and assume
it is embedded in the plane.
The neighborhood of $v$ is acyclic; otherwise $v$ and its neighborhood induce a wheel, which by
 Lemma~\ref{lm:wheel} is not a weak unit interval graph. Thus the number of edges
 between any two neighbors of $v$ is at most $\deg(v)-1$, where $\deg(v)$ is the degree of $v$.
Summing over all vertices, we get $S= 2|E| - |V|$. Let $T$ and
 $\overline{T}$ be the sets of triangular and non-triangular faces in an embedding
 of $G$. For each triangle $t \in T$ each of the edges in $t$ is counted once
 in $S$. Thus, $2|E|-|V|\ge 3|T|$ $\Rightarrow |T|\le (2|E|-|V|)/3$.
Counting both sides of the edges we get $2|E|\ge 3|T|+4|\overline{T}|$
 $\Rightarrow |T| + |\overline{T}| \le (2|E|+|T|)/4 \le (8|E|-|V|)/12$, since
 $|T|\le (2|E|-|V|)/3$. Thus, from Euler's formula $|V|-|E|+|T|+|\overline{T}|=2$, we have
 $|V|-|E| + (8|E|-|V|)/12 \ge 2$ $\Rightarrow |E| \le 11|V|/4 -6$.
\end{proof}

In~\cite{alam13} all examples of graphs without threshold-coloring (and thus, not weak unit interval graphs)
have girth 3. We strengthen the bound by proving the following.

\begin{lemma}
There exist planar girth-4 graphs that are not weak unit interval graphs.
\end{lemma}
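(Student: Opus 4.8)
The plan is to exhibit a single planar graph $H$ of girth $4$ together with an edge-labeling $\ell$ admitting no weak unit interval representation; by Lemma~\ref{lem:equiv} it is equivalent (and combinatorially cleaner) to rule out a threshold-coloring, and by scaling we may assume the interval diameter is $d=1$. Two structural facts guide the construction. First, since we have shown that every triangle-free outerplanar graph is a weak unit interval graph, $H$ must be \emph{non-outerplanar}, so it should carry a $K_{2,3}$- or wheel-type obstruction. Second, the only available technique for proving non-representability is the pigeonhole mechanism of Lemma~\ref{lm:wheel}: force two vertices whose connecting edge is labeled \df{far} to lie inside a common interval of length $d$, so that $||I(v_2)-I(v_n)||\le d$ contradicts the label. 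The plan is therefore to build $H$ as a girth-$4$ analogue of the wheel and rerun that argument.

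Recasting the wheel proof abstractly, it uses three ingredients: (a)~a \emph{hub} $h$ that is near to two ``end'' rim vertices $v_2,v_n$ and far from all interior rim vertices; (b)~a \emph{near path} $v_2,v_3,\dots,v_n$ whose consecutive steps move distance at most $d$; and (c)~the closing far edge $(v_2,v_n)$. Because each interior $v_i$ satisfies $||I(h)-I(v_i)||>d$ while each rim step is short, an induction shows that the path can never cross the hub, so $v_2$ and $v_n$ land on the same side of $h$ within distance $d$, forcing them near and contradicting~(c). The key observation is that this argument never actually uses the triangular faces of the wheel: it only uses that two consecutive path vertices are joined by a \emph{single} near edge and that \emph{both} of them are directly constrained to $h$. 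This is exactly the tension, because imposing a direct hub constraint on every rim vertex together with the rim edges is what creates the triangles.

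The plan is thus to de-triangulate the wheel while preserving precisely this rigidity. The natural first attempts---subdividing each spoke as $h,s_i,v_i$, or subdividing each rim edge as $v_i,u_i,v_{i+1}$---both destroy the triangular faces and raise the girth to $4$, but (and here lies the main obstacle) they inject slack that defeats the induction: a subdivided spoke can no longer certify $||I(h)-I(v_i)||>d$, since a far label lower-bounds a distance only along a genuine edge, and an unconstrained rim-subdivision vertex $u_i$ can ``sneak'' to the opposite side of the hub (a length-two near connection permits displacement up to $2d$), after which the same-side argument collapses. The heart of the proof is therefore to design a girth-$4$ gadget that replaces each triangle $h\,v_i\,v_{i+1}$ yet still pins the intermediate vertex to one side of $h$---for instance by tying the subdivision vertices to a second reference level (a companion hub or a rigid near-path) whose combined near/far constraints re-forbid crossing. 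I expect the bulk of the work to be a careful case analysis showing that, for the resulting finite graph and a specific labeling, every placement of the vertices either violates a near/far constraint locally or funnels the two designated far vertices into a single interval of length $d$; choosing the labels on the new edges and the number of rim vertices so that this case analysis closes is the delicate step.
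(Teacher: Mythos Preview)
Your proposal never leaves the planning stage. You correctly observe that the wheel argument rests on (a) short near rim-steps and (b) a direct far/near spoke from the hub to \emph{every} rim vertex, and you correctly observe that subdividing the spokes or the rim destroys (b) or (a) respectively. But the fix you then offer---``tying the subdivision vertices to a second reference level (a companion hub or a rigid near-path)''---is only a hope: no gadget is exhibited, no labeling is fixed, and no case analysis is carried out. Since the entire content of the lemma is the existence of one such graph, and since you have already argued that the two most natural candidates fail, the proposal has a genuine gap at exactly the point where the proof should begin.

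For comparison, the paper does not de-triangulate a wheel at all. It writes down a single six-vertex planar girth-$4$ graph (vertices $u,v_1,v_2,w_1,w_2,x$, given explicitly in a figure) with a fixed labeling, and disposes of any putative representation by a two-case split on the sign of $I(v_2)-I(u)$ after normalising $I(w_2)>I(u)$. If $v_2$ and $w_2$ land on opposite sides of $u$, the far edges $(u,v_2),(u,w_2)$ force them more than $2$ apart, contradicting the near $2$-path through $x$; if they land on the same side, the near/far pattern through $v_1$ and $w_1$ traps a far pair within distance $1$. This is much closer to the ``$K_{2,3}$-type obstruction'' you mention in passing than to anything wheel-shaped, and it requires no gadgets, no auxiliary hubs, and no induction along a rim.
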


\begin{proof}
Consider the graph in Fig.~\ref{fig:girth4cex}. Suppose there exists a weak unit interval
representation $I$. Without loss of generality suppose that $I(w_2) > I(u)$.
Let us consider two cases.  First, suppose $I(v_2) < I(u)$. Since the edges $(u,v_2)$
and $(u,w_2)$ are labeled $F$, it must be that $I(v_2) < I(u)-1$ and $I(u)+1 < I(w_2)$. Then
vertex $x$ must be represented by an interval near to both of these, which is impossible since
$||I(v_2)-I(w_2)|| > 2$.

Second, suppose $I(v_2) > I(u)$. Then $I(v_1)\geq I(v_2)-1>I(u)$, and
$I(u)<I(w_2)$ implies that $I(v_1)<I(w_2)$. Similarly, $I(w_1)<I(v_2)$. Now, either
$I(w_2)\leq I(v_2)$, or $I(v_2)<I(w_2)$ . In the first case, $w_2$ is
near to $v_1$ since $I(v_1)<I(w_2)\leq I(v_2)$ and $||I(v_1)-I(v_2)||\leq 1$, a contradiction.
The second case leads to a similar contradiction.
\end{proof}

\section{Conclusion and Open Problems}
We studied weak unit disk and the weak unit interval representations
for planar and outerplanar graphs. Many interesting open problems remain.

\begin{enumerate}
\item Deciding whether a graph
is a weak unit disk (interval) graph with respect to a given edge-labeling is \NP-complete. However, the
problem of deciding whether a (planar) graph is a weak unit disk (interval) graph is open.
Note that the class of weak unit disk (interval) graphs is not closed under minors, as
subdividing each edge of a planar graph three times results in a planar graph with girth at
least $10$, which is a weak unit interval graph.

\item
 Tightening the lower and upper
bounds for maximum average degree of
weak unit interval graphs, given in Theorems~\ref{thm:mad}
and~\ref{thm:maximal}, is a challenging open problem.
Based on extensive computer experiments, we conjecture that there
are no weak unit interval graphs with more than $2|V|-3$ edges.

\item We considered planar graphs,
but little is known for general graphs. In particular, it
would be interesting to find out whether the edge density of
weak unit disk (interval) graphs is always bounded by a constant.
\end{enumerate}

\noindent{\bf Acknowledgments:} We thank Michalis Bekos, Gasper
Fijavz, and Michael Kaufmann for productive discussions about the problems.

\begin{small}
\bibliographystyle{splncs03}
\bibliography{literature}
\end{small}

\newpage
\appendix

\section*{\LARGE Appendix}

\section{Unit-Cube Contact Representations of Planar Graphs}
\label{app:cubes}
This type of coloring is motivated by the geometric problem of
\emph{unit-cube contact representation} of planar
graphs~\cite{Bremner12}.
In such a representation, each vertex of a graph is represented by a
 unit-size cube and each edge is realized by a common boundary with
non-zero area between the two corresponding cubes.

The problems are related, as threshold-coloring can be
used to find unit-cube contact representations. In particular if a
graph $G$ has a unit-square contact representation $\Gamma$ in the plane and if $H$ is
any subgraph of $G$ formed by the
near edges of $G$ in some threshold-coloring, then one can find a unit-
cube-contact representation of $H$.

Formally, let $H$ be a threshold subgraph of $G=(V,E)$ and let
$c:V\rightarrow \{1\dots r\}$ be an $(r,t)$-threshold-coloring of $G$ with respect to an edge-labeling
$\ell$. We now compute a unit-cube contact representation of $H$ from
$\Gamma$ using $c$.
Assume (after possible rotation and translation) that the bottom face for each
cube in $\Gamma$ is co-planar with the plane $z=0$; see Fig.~\ref{fig:teaser}(a).
Also assume (after possible scaling)
that each cube in $\Gamma$ has side length $t+\epsilon$, where $0<\epsilon<1$.
Then we can obtain a unit-cube contact representation of $H$ from $\Gamma$ by lifting
the cube for each vertex $v$ by an amount $c(v)$ so that its bottom face is at $z=c(v)$; see
Fig.~\ref{fig:teaser}(b).
Note that for any edge $(u,v)$ with $\ell(u,v)=N$, the relative distance between the bottom faces of the
cubes for $u$ and $v$ is $|c(u)-c(v)|\le t<(t+\epsilon)$; thus the two cubes maintain contact. On the other hand, for each pair of vertices $u,v$ with
$\ell(u,v)\neq N$, one of the following two cases occurs: (i)~either $(u,v)\notin E$ and
their corresponding cubes remain non-adjacent as they were in $\Gamma$; or
(ii) $\ell(u,v)=F$ and the relative distance between the bottom faces of the two
cubes is $|c(u)-c(v)|\ge (t+1)>(t+\epsilon)$, making them non-adjacent.

\begin{figure}[h]
\centering
\includegraphics[width=0.9\textwidth]{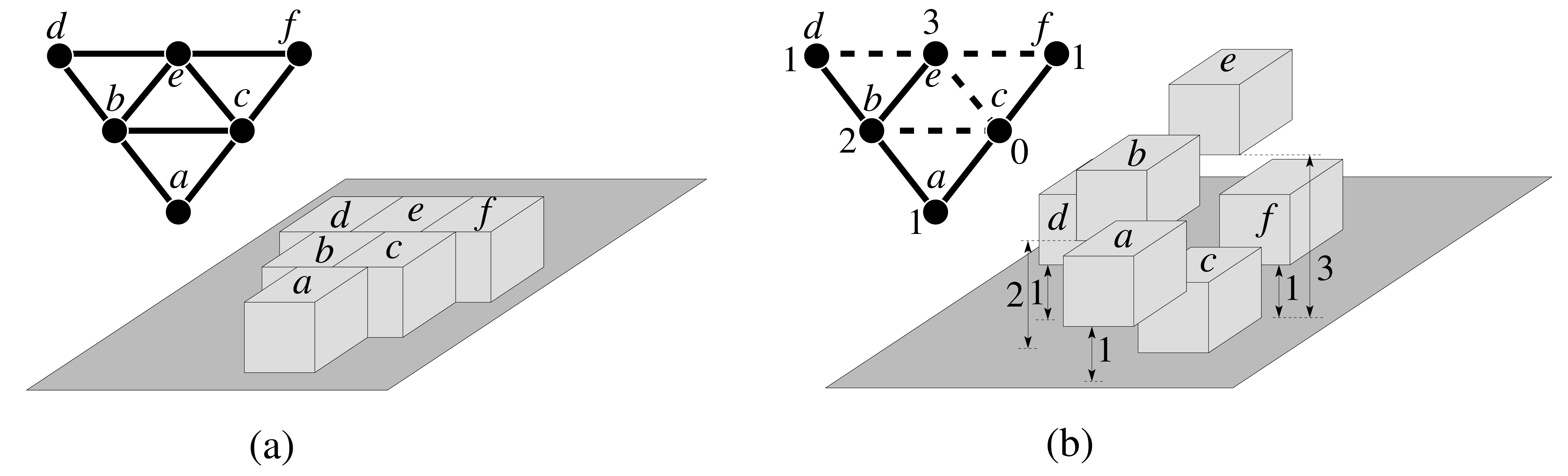}
\caption{(a) A graph $G$ with a unit-square contact representation. (b)~A threshold-coloring of $G$
 with threshold $1$ and a unit-cube-contact representation of the subgraph formed by near
 edges~\cite{alam13}. The vertex colors correspond to the elevation of the cubes, while
 the threshold gives the size of the cubes.
 In the figure we indicate $N$ edges with solid lines and $F$ edges with dashed lines.}
\label{fig:teaser}
\end{figure}

\end{document}